\documentclass{article}
\usepackage{amsmath}
\usepackage{amsthm}
\usepackage{amssymb}
\usepackage{mathtools}
\usepackage{tikz}
\usepackage{fancyvrb}
\usepackage{hyperref}
\usepackage{xcolor}
\usepackage{algorithm}
\usepackage{algpseudocode}
\usepackage{float}
\usepackage{comment}
\usepackage{setspace}
\usepackage{booktabs}

\usetikzlibrary{arrows.meta,calc}

\newtheorem{theorem}{Theorem}[section]
\newtheorem{lemma}[theorem]{Lemma}
\theoremstyle{definition}
\newtheorem{definition}[theorem]{Definition}
\theoremstyle{remark}
\newtheorem{remark}[theorem]{Remark}

\title{GPU-Accelerated Search for Fast Matrix Multiplication over $\mathbb{F}_2$}
\author{
  Zeke Medley
  \quad Anagha Gokul 
  \quad Quan Luu
  \quad Panagiotis Manolios\thanks{\texttt{medley.g, gokul.a, luu.qu, p.manolios@northeastern.edu}} \\[2pt]
  Northeastern University
}

\date{}

\begin{document}

\maketitle

\begin{abstract}
We present a GPU-accelerated algorithm for searching for ways to multiply matrices with few scalar multiplications. Our algorithm searches the matrix multiplication flip graph and, on an NVIDIA H200, performs over a billion search steps per second, a $1000\times$ improvement over previous GPU-accelerated search on the tensor for $7\times7$ matrix multiplication.  Our second contribution is a proof that, over \(\mathbb{F}_2\), the directed matrix multiplication flip graph is strongly connected with only flip and plus edges. This removes the need for computationally expensive reduction edges in the connectivity argument used in previous work.  Using this GPU-accelerated search procedure on our simplified flip graph, we find a way to multiply $7\times 7$ matrices over $\mathbb{F}_2$ using 245 multiplications, a three-multiplication improvement over the previous record.
\end{abstract}

\section{Introduction}
\label{sec:introduction}
The vertices of the matrix multiplication flip graph are ways to multiply matrices. Some ways require fewer multiplications than others; because multiplication is expensive in hardware, considerable effort has been spent exploring the flip graph to find them. Indeed, the current records for multiplying $5\times 5$, $6\times 6$, and $7\times 7$ matrices over $\mathbb{F}_2$ were found via flip graph search \cite{moosbauer2025flipgraphs, perminov2025fast}.

In a sense we will soon make precise, a way to multiply matrices can be represented as a multiset of triples. The central operation of flip graph search is finding sets of triples with the same value in one position. This is a matching problem CPU-oriented implementations naturally solve with mappings between values and positions \cite{moosbauer2025flipgraphs}.

The guiding design principle of this work is to exploit the massive parallelism available in modern GPUs for flip graph search. Thus, rather than maintaining and querying large maps, we use hardware-supported matching primitives to identify sets of triples with the same value in a position, replacing hash-table lookups with hardware.

Our contributions are as follows.
\begin{enumerate}
    \item We design a GPU-friendly flip-graph search algorithm that uses warp-level matching primitives to detect flip opportunities. This outperforms previous GPU-accelerated search \cite{perminov2025fast} by a factor of over $1000$ while searching the $7\times 7$ flip graph and visits over a billion vertices a second on an NVIDIA H200 GPU.

    \item We introduce generalized multi-summand flips, which exploit linear dependencies among summands sharing a tensor factor and expose rank reductions efficiently on GPUs. On average generalized flips discover the record $4\times 4$ scheme $1.66\times$ as fast, versus an implementation without them.

    \item We prove that, over $\mathbb{F}_2$, ordinary flip and plus edges alone strongly connect the directed matrix multiplication flip graph, removing the need for a third, computationally expensive reduction edge from prior work \cite{arai2024adaptive}.

    \item These culminate in a way to multiply $7\times 7$ matrices over \(\mathbb{F}_2\) with $245$ multiplications, a three-multiplication improvement over the prior record \cite{perminov2025fast}.
\end{enumerate}

\S~\ref{sec:preliminaries} defines the flip graph and related notation. \S~\ref{sec:connectivity-proof} contains the connectivity proof for flip and plus edges over $\mathbb{F}_2$. \S~\ref{sec:gpu-accelerated-search} describes our GPU-accelerated search procedure, including our use of GPU matching primitives and generalized flips. \S~\ref{sec:evaluation} compares our implementation to Perminov's GPU-based search algorithm \cite{perminov2025fast} and describes our ablation testing for generalized flips.

\section{Related Work}

The matrix multiplication flip graph was introduced by Kauers and Moosbauer \cite{kauers2023flip}, improving on a record for $5\times 5$ matrix multiplication set by AlphaTensor \cite{fawzi2022discovering}. Kauers and Moosbauer's construction connected schemes if they were related by a flip or reduction transform, and it was shown that with these edges the flip graph is weakly connected: by treating reduction transforms as bidirectional, every scheme is reachable from every other scheme.

Arai, Ichikawa, and Hukushima \cite{arai2024adaptive} improved the connectivity situation by introducing a third edge, plus transitions. Empirically, these help search escape local minima, and theoretically their addition was shown to make the flip graph connected. This work resulted in another improvement of the record for $5\times 5$ matrix multiplication.

Moosbauer and Poole introduced flip graphs with symmetry \cite{moosbauer2025flipgraphs}. A detailed account of these symmetries is outside the scope of this paper, but the practical effect is to reduce the size of the explorable flip graph at the expense of completeness, as non-symmetric schemes become unreachable. This work resulted in further improvements on the record $5\times 5$ and $6\times 6$ matrix multiplication schemes.

Subsequently, Wood adapted flip graph search to the commutative setting \cite{wood2025commutative}; Khoruzhii, Gel{\ss}, and Pokutta applied flip graph search to structured matrix multiplication, for example multiplication resulting in a symmetric matrix \cite{khoruzhii2025structured}; Kauers and Wood \cite{kauers2025meta} introduced the meta flip graph, adding extension and projection moves between different matrix formats; and, notably, Perminov gave a GPU-accelerated flip-graph implementation \cite{perminov2025fast}.

Our choice to consider connectivity over $\mathbb{F}_2$ stems from two observations. First, Moosbauer and Poole do not explicitly search for reduction edges, noting that they occur rarely and slow down search \cite{moosbauer2025flipgraphs}. Second, the original flip graph, adaptive flip graph, and symmetry-constrained flip graph papers all perform their core flip graph searches over the field $\mathbb{F}_2$ \cite{kauers2023flip,arai2024adaptive,moosbauer2025flipgraphs}. Because the primary use of the reduction edge in Arai, Ichikawa, and Hukushima's connectivity proof is to remove zero-summing submultisets, this led us to consider whether, over $\mathbb{F}_2$, the reduction edge may be unnecessary for connectivity. As it turns out, it is (Theorem~\ref{thm:main}).

\section{Preliminaries}
\label{sec:preliminaries}
Throughout, we use $\uplus$ for a multiset union, $\cup$ for the standard set union, and $\setminus$ for both multiset and set difference.

\subsection{Matrix Multiplication Tensors}
\begin{definition}[Matrix multiplication tensor] \label{def:mmt} 
The matrix multiplication tensor for multiplying two $n \times n$ matrices\footnote{The exposition is for square matrix multiplication, but the definitions extend directly to rectangular matrix multiplication.} is the unique tensor $M^{(n)}$ satisfying, for all $n \times n$ matrices $D$ and $F$, \begin{equation} \label{eq:matmul-tensor-def} (DF)_{k,\ell} = \sum_{g,h,i,j=1}^{n} M^{(n)}_{g,h,i,j,k,\ell} D_{g,h} F_{i,j}. \end{equation} Equivalently, $M^{(n)}$ records which products of entries of $D$ and $F$ contribute to each entry of $DF$. \end{definition}

A rank-$R$ decomposition of $M^{(n)}$ is an expression \begin{equation} \label{eq:rank-r-decomposition} M^{(n)} = \sum_{r=1}^{R} A^{(r)} \otimes B^{(r)} \otimes C^{(r)}, \end{equation} where each $A^{(r)},B^{(r)},C^{(r)}$ is an $n \times n$ matrix and \[ \bigl(A \otimes B \otimes C\bigr)_{g,h,i,j,k,\ell} = A_{g,h} B_{i,j} C_{k,\ell}. \]

Such a decomposition gives an algorithm for multiplying $D$ and $F$ using $R$ multiplications involving entries of the input matrices. Indeed, substituting~\eqref{eq:rank-r-decomposition} into \eqref{eq:matmul-tensor-def} gives \[ \begin{aligned} (DF)_{k,\ell} &= \sum_{g,h,i,j} \left( \sum_{r=1}^{R} A^{(r)} \otimes B^{(r)} \otimes C^{(r)} \right)_{g,h,i,j,k,\ell} D_{g,h}F_{i,j} \\ &= \sum_{r=1}^{R} C^{(r)}_{k,\ell} \left( \sum_{g,h} A^{(r)}_{g,h}D_{g,h} \right) \left( \sum_{i,j} B^{(r)}_{i,j}F_{i,j} \right). \end{aligned} \] Thus the only multiplications involving entries of $D$ and $F$ are the $R$ products \[ m_r = \left( \sum_{g,h} A^{(r)}_{g,h}D_{g,h} \right) \left( \sum_{i,j} B^{(r)}_{i,j}F_{i,j} \right), \qquad r=1,\ldots,R. \] Conversely, any way to multiply two matrices with $R$ scalar multiplications gives a rank-$R$ decomposition of $M^{(n)}$. Our goal is thus to look for low-rank decompositions of $M^{(n)}$.

\subsection{The matrix multiplication flip graph}
\begin{definition}[Schemes and vertices] \label{def:vertices} The vertices of the matrix multiplication flip graph for $n \times n$ matrix multiplication are multisets \[ S = \left\{ A^{(1)} \otimes B^{(1)} \otimes C^{(1)}, \ldots, A^{(|S|)} \otimes B^{(|S|)} \otimes C^{(|S|)} \right\} \] of nonzero rank-one tensors satisfying \[ \sum_{s \in S} s = M^{(n)}. \] We call such a multiset a scheme. The size $|S|$ is the rank of the scheme. \end{definition}



There is an edge between two vertices in the flip graph if they can be transformed into one another via a \textit{flip} or \textit{plus} transform.

\begin{definition}[Flip and plus transformations] \label{def:edges} Let $S$ be a scheme. If $S$ can be transformed into $S'$ by one of the following operations, then there is a directed flip/plus edge from $S$ to $S'$. 

\textbf{Flip.} Flip on the $A$ position applies to two summands \( A \otimes B \otimes C\), \( A' \otimes B' \otimes C' \) with $A=A'$ and replaces them by \(A \otimes B \otimes (C+C')\), \(A' \otimes (B'-B) \otimes C'. \)
Equivalently, \[ \begin{aligned} S' = S &\setminus \left\{ A \otimes B \otimes C, A' \otimes B' \otimes C' \right\} \\ &\uplus \left\{ A \otimes B \otimes (C+C'), A' \otimes (B'-B) \otimes C' \right\}, \end{aligned} \] with any zero summands deleted.

\textbf{Plus.} Plus on the $A$ position takes two summands \(A \otimes B \otimes C\), \(A' \otimes B' \otimes C', \) and replaces the second summand by two summands: \((A'-A) \otimes B' \otimes C'\), and \(A \otimes B' \otimes C'. \)
Equivalently, 
\[ \begin{aligned} S' = S &\setminus \left\{ A' \otimes B' \otimes C' \right\} \\ &\uplus \left\{ (A'-A) \otimes B' \otimes C', A \otimes B' \otimes C' \right\}, \end{aligned} \] with any zero summands deleted.

These transformations on the $B$ and $C$ positions are defined analogously.
\end{definition}

The reader should verify that flip and plus transformations preserve the sum of rank-one tensors in a scheme. Therefore, every vertex reached by such a transformation is again a decomposition of $M^{(n)}$.

Our main result is,

\begin{theorem}[Flip-plus connectivity over $\mathbb{F}_2$] \label{thm:main} Over $\mathbb{F}_2$, for every $n \ge 2$, the directed matrix multiplication flip graph whose edges are ordinary flip and plus transformations is strongly connected. Equivalently, for any two schemes $S$ and $S'$ for $M^{(n)}$, there is a directed path from $S$ to $S'$ using only flip and plus transformations. \end{theorem}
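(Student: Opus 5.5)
The plan is to route every path through the standard scheme
\[
S_0=\{E_{g,h}\otimes E_{h,j}\otimes E_{g,j} : 1\le g,h,j\le n\},
\]
where $E_{g,h}$ denotes the elementary matrix with a single $1$ in position $(g,h)$. I would prove two things: (i) from every scheme $S$ there is a directed flip/plus path to $S_0$, and (ii) every single flip or plus edge can be reversed by a directed flip/plus path. Given (ii), the path of (i) from $S'$ to $S_0$ can be reversed, so we get $S\to S_0\to S'$, and Theorem~\ref{thm:main} follows.

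For (ii) I would use the following observations, all specific to $\mathbb{F}_2$.
\begin{itemize}
\item \emph{Merging.} Consider a flip on the $B$ position applied to $X\otimes B\otimes C$ and $Y\otimes B\otimes C$, which share two factors. It yields $(X+Y)\otimes B\otimes C$ together with a summand $Y\otimes B\otimes 0$, which is deleted. So the two summands merge into one.
\item \emph{Undoing a plus.} The plus edge replacing $A'\otimes B'\otimes C'$ by $(A'-A)\otimes B'\otimes C'$ and $A\otimes B'\otimes C'$ is undone by exactly this merge.
\item \emph{Cancelling duplicates.} A flip applied to two identical summands $T,T$ produces two zero summands. Both are deleted, so duplicate pairs cancel, since $T+T=0$ over $\mathbb{F}_2$.
\item \emph{Undoing a flip with no deletion.} A flip on a pair sharing $A$ acts on the remaining coordinates $(B,C),(B',C')$ by an invertible linear map over $\mathbb{F}_2$, possibly composed with swapping which summand plays the role of ``first''. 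The group generated by these maps is finite, so the inverse of a flip is a power of flips. I would check this order explicitly; it is a small computation in something like $SL_2(\mathbb{F}_2)$.
\item \emph{Undoing a flip with a deletion.} The degenerate flips, in which a summand is deleted, occur when $C+C'=0$ or $B'-B=0$. Those are exactly merges, and a merge is undone by a plus that splits the merged factor back apart.
\end{itemize}

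For (i) I would proceed in three steps.
\begin{enumerate}
\item \emph{Every matrix can be made to appear as an $A$-factor.} Plus on $A$ applied to summands with $A$-factors $A$ and $A'$ keeps $A$ and creates the new $A$-factor $A'+A$, while the old summand with $A'$ is replaced. By iterating, and merging back when needed, every element of the span of the current $A$-factors can be made to appear as an $A$-factor. This span is all of $\mathbb{F}_2^{n\times n}$, because the flattening of $M^{(n)}$ along the $A$ mode has full rank $n^2$. In particular each $E_{g,h}$ can be made to occur as an $A$-factor.
\item \emph{Reduce every $A$-factor to an elementary matrix.} Once some summand has $A$-factor $E_{g,h}$, apply plus to any summand whose $A$-factor $A'$ has $(A')_{g,h}=1$ and weight at least $2$. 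This splits it into $(A'+E_{g,h})\otimes B'\otimes C'$ and $E_{g,h}\otimes B'\otimes C'$, strictly lowering total Hamming weight. Repeating drives every $A$-factor to an elementary matrix.
\item \emph{Repeat for $B$ and $C$, then cancel.} I would do the same for the $B$ and $C$ positions. Every summand is then a basis tensor $E\otimes E'\otimes E''$. Identical pairs are cancelled by flips. Because the basis tensors are linearly independent and the sum is $M^{(n)}$, what remains is exactly $S_0$.
\end{enumerate}

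The main obstacle is making steps 1 and 2 rigorous. The auxiliary summands created to expose $E_{g,h}$ must not interfere with the rest of the reduction. When the $B$ and $C$ positions are processed, the $A$-factors that were already made elementary must not be destroyed, because plus on $B$ only touches $B$-factors but the supply of elementary $B$-factors has to be regenerated as in step 1. I would handle this with a potential function, namely the total weight of non-elementary factors, and show it strictly decreases. The step I am least sure of is that the span-generation of step 1 can be carried out without raising that potential. My first attempt would be to generate each needed elementary factor, use it for all the splits that need it, and then merge the temporary summands back, since merging is a flip.
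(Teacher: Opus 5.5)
Part (ii) has a genuine gap, and it is the same gap you flagged in step~1 of (i). Your reversal of plus edges (by a merge) and of non-degenerate flips (which over $\mathbb{F}_2$ are involutions) is fine; the problem is the degenerate flips.

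\textbf{Duplicate cancellation is not reversed.} A flip on two identical summands $T,T$ deletes both. This is not a merge. Reversing it means inserting a duplicate pair $T,T$ into a scheme, and none of your observations does that. Step~3 of (i) ends with exactly such cancellations, so reversing your path $S'\to S_0$ requires inserting pairs of basis tensors into $S_0$.

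\textbf{Merges are not reversed as stated.} To split $A\otimes(B+B')\otimes C$ back into $A\otimes B\otimes C$ and $A\otimes B'\otimes C$ by a plus on the $B$ position, you need some summand whose $B$-factor is $B$ or $B'$ to act as pivot. After the merge there need not be one.

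\textbf{Step~1 is not established.} ``Plus, then merge back'' just returns to the starting scheme. It does not show how a new factor can appear while the old ones survive. (Also, the split in step~2 preserves total Hamming weight; what decreases is the sum over factors of weight minus one.)

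The missing idea, which is the heart of the paper's proof, is that over $\mathbb{F}_2$ a duplicate pair can be inserted without disturbing anything else. Suppose $A\otimes B\otimes C$ and $A'\otimes B'\otimes C'$ are in $S$ with $A\neq A'$.
\begin{itemize}
\item A plus on the $A$ position with the first summand as pivot replaces the second by $(A+A')\otimes B'\otimes C'$ and $A\otimes B'\otimes C'$.
\item A second plus on the $A$ position, with pivot $(A+A')\otimes B'\otimes C'$, replaces $A\otimes B'\otimes C'$ by $A'\otimes B'\otimes C'$ and $(A+A')\otimes B'\otimes C'$.
\end{itemize}
The net effect is $S\uplus\{(A+A')\otimes B'\otimes C',\,(A+A')\otimes B'\otimes C'\}$.

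Chain this along nonzero partial sums of a basis drawn from the factors (your full-rank flattening). Then repeat in the $B$ and $C$ positions and cancel the intermediate pairs. This gives a path from $S$ to $S\uplus\{X\otimes Y\otimes Z,\,X\otimes Y\otimes Z\}$ for arbitrary nonzero $X,Y,Z$; this is the content of Lemmas~\ref{lemma:add-duplicate-sum} and~\ref{lem:add-arbitrary}.

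With that lemma every degenerate flip becomes reversible: re-insert the cancelled pair, or insert two copies of $A\otimes B\otimes C$ and merge one of them into $A\otimes(B+B')\otimes C$. Step~1 also becomes rigorous, and your route $S\to S_0\to S'$ then goes through.

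The paper instead uses the lemma to avoid reversal entirely. It inserts two copies of every summand of $S'$. It then dissolves the zero-sum submultiset $S\uplus S'$ into standard basis tensors, where each split is an inserted pair $E_{i,j}\otimes B\otimes C$ followed by one merge. Those basis tensors then cancel in pairs.
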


\section{Connectivity Proof}\label{sec:connectivity-proof}

We prove Theorem~\ref{thm:main}. The proof closely follows the connectivity proof of Arai et al.~\cite{arai2024adaptive}. Our main deviation is to exploit the special structure of $\mathbb{F}_2$ in Lemmas~\ref{lem:add-arbitrary} and~\ref{lemma:zero-subset}, which allows us to remove zero-summing submultisets without using reduction edges. 

Throughout this section we assume $n \geq 2$; this is the relevant regime for matrix multiplication search, and it avoids the degenerate one-dimensional case.

\begin{lemma}\label{lemma:mmt}
    The matrix multiplication tensor for multiplying two $n$ by $n$ matrices is
    \[
    M^{(n)}=\sum_{i=1}^n \sum_{j=1}^n \sum_{k=1}^n E_{ij} \otimes E_{jk} \otimes E_{ik},
    \]
    where $E_{ij}$ is an $n$ by $n$ matrix with a $1$ at position $(i,j)$ and zeros everywhere else.
\end{lemma}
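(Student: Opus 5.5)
The plan is to check the identity entry by entry against Definition~\ref{def:mmt}. Since $M^{(n)}$ is characterized as the \emph{unique} tensor satisfying \eqref{eq:matmul-tensor-def} for all $D$ and $F$, it suffices to show two things. First, the proposed tensor
\[
T=\sum_{i,j,k=1}^n E_{ij}\otimes E_{jk}\otimes E_{ik}
\]
satisfies \eqref{eq:matmul-tensor-def}. Second, such a tensor is uniquely determined, so it must equal $M^{(n)}$.

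For the first step, compute the entries of $T$ from the definition of the outer product:
\[
T_{g,h,a,b,c,d}=\sum_{i,j,k}(E_{ij})_{g,h}(E_{jk})_{a,b}(E_{ik})_{c,d}.
\]
The only nonzero contribution comes from $i=g$, $j=h$, $k=b$, and it requires $a=h$, $c=g$, $d=b$. So $T_{g,h,a,b,c,d}=\delta_{ha}\,\delta_{gc}\,\delta_{bd}$. Substituting this into the right-hand side of \eqref{eq:matmul-tensor-def} at output position $(k,\ell)$ gives
\[
\sum_{g,h,a,b}\delta_{ha}\delta_{gk}\delta_{b\ell}\,D_{g,h}F_{a,b}=\sum_{h}D_{k,h}F_{h,\ell}=(DF)_{k,\ell},
\]
which is exactly the defining equation.

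For uniqueness, suppose two tensors $T$ and $T'$ both satisfy \eqref{eq:matmul-tensor-def}. Then their difference $U$ satisfies $\sum U_{g,h,i,j,k,\ell}D_{g,h}F_{i,j}=0$ for all $D,F$. Taking $D=E_{gh}$ and $F=E_{ij}$ isolates the single coefficient $U_{g,h,i,j,k,\ell}$ and shows it vanishes. This argument uses only the basis matrices, so it works over any field, including $\mathbb{F}_2$.

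Nothing here is a real obstacle. The only care needed is index bookkeeping: the summation index names in the lemma ($i,j,k$) clash with the tensor index names in Definition~\ref{def:mmt}, so in the write-up I would rename the lemma's indices, as done above, to avoid confusion.
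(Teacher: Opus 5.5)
Your proof is correct and is essentially the paper's argument. Both reduce the lemma to the entrywise formula $M^{(n)}_{g,h,i,j,k,\ell}=\delta_{g,k}\delta_{h,i}\delta_{j,\ell}$: the paper reads these coefficients off Definition~\ref{def:mmt} and expands in the standard basis, while you compute the candidate's entries, check the defining identity, and invoke uniqueness. Your explicit uniqueness step, testing on $D=E_{gh}$ and $F=E_{ij}$, justifies the coefficient extraction that the paper leaves implicit.
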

\begin{proof}
By  Definition~\ref{def:mmt}, the coefficient $M^{(n)}_{g,h,i,j,k,\ell}$ is $1$ exactly when the product $D_{g,h}E_{i,j}$ contributes to $(DE)_{k,\ell}$. This happens precisely when $g=k$, $h=i$, and $j=\ell$. Hence \[ M^{(n)}_{g,h,i,j,k,\ell} = \delta_{g,k}\delta_{h,i}\delta_{j,\ell}. \]

\noindent Therefore, inspecting the standard basis expansion of $M^{(n)}$ gives \[ \begin{aligned} M^{(n)} &= \sum_{i,j,j',k,i',k'} \delta_{i,i'}\delta_{k,k'}\delta_{j,j'} E_{i,j}\otimes E_{j',k}\otimes E_{i',k'} \\ &= \sum_{i,j,k} E_{i,j}\otimes E_{j,k}\otimes E_{i,k}. \end{aligned} \]
\end{proof}

\begin{lemma}\label{lemma:span}
Let $S$ be a decomposition of $M^{(n)}$ over a field $\mathbb{F}$: \[ S = \left\{ A^{(1)}\otimes B^{(1)}\otimes C^{(1)}, \ldots, A^{(|S|)}\otimes B^{(|S|)}\otimes C^{(|S|)} \right\}. \] Then \[ \begin{aligned} \operatorname{span}_{\mathbb{F}}\{A^{(r)} : r \in [|S|]\} &= \operatorname{span}_{\mathbb{F}}\{B^{(r)} : r \in [|S|]\} \\ &= \operatorname{span}_{\mathbb{F}}\{C^{(r)} : r \in [|S|]\} = \mathbb{F}^{n\times n}. \end{aligned} \]
\end{lemma}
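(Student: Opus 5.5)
We will prove Lemma~\ref{lemma:span} by contraction. We show the $A$-span is all of $\mathbb{F}^{n\times n}$; the $B$ and $C$ cases follow by the same argument with the roles of the factors permuted.

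First, pair $M^{(n)}$ against a test pair $(Y,Z)$ in the last two factors. For $n\times n$ matrices $Y$ and $Z$, let $\phi_{Y,Z}$ be the linear map on tensors that sends $A\otimes B\otimes C$ to
\[
\Bigl(\sum_{i,j} B_{i,j}Y_{i,j}\Bigr)\Bigl(\sum_{k,\ell} C_{k,\ell}Z_{k,\ell}\Bigr) A .
\]
This map is well defined on the tensor product since it is trilinear, and it returns an $n\times n$ matrix. Applying it to the decomposition $S$ gives
\[
\phi_{Y,Z}\bigl(M^{(n)}\bigr)=\sum_{r} \langle B^{(r)},Y\rangle\langle C^{(r)},Z\rangle A^{(r)} .
\]
The right-hand side is a linear combination of the $A^{(r)}$, so it lies in $\operatorname{span}_{\mathbb{F}}\{A^{(r)}\}$. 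It therefore suffices to show that the values $\phi_{Y,Z}(M^{(n)})$, as $Y$ and $Z$ range over all matrices, already include every matrix unit $E_{ij}$.

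Second, compute the left-hand side using Lemma~\ref{lemma:mmt}:
\[
\phi_{Y,Z}\bigl(M^{(n)}\bigr)=\sum_{i,j,k} Y_{j,k}Z_{i,k}\,E_{ij}.
\]
Take $Y=E_{j0,k0}$ and $Z=E_{i0,k0}$ for fixed indices $i_0,j_0,k_0$. The coefficient $Y_{j,k}Z_{i,k}$ is nonzero only when $j=j_0$, $i=i_0$ and $k=k_0$, where it equals $1$. So only that single term survives, and the left-hand side is exactly $E_{i_0j_0}$. Hence every $E_{i_0j_0}$ lies in the $A$-span, and the $A$-span is all of $\mathbb{F}^{n\times n}$.

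For the $B$ factor, contract the $A$ and $C$ factors against $E_{i0,j0}$ and $E_{i0,k0}$; this isolates $E_{j0,k0}$. For the $C$ factor, contract the $A$ and $B$ factors against $E_{i0,j0}$ and $E_{j0,k0}$; this isolates $E_{i0,k0}$. The argument works over any field, including $\mathbb{F}_2$.

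The only real subtlety is stating the contraction cleanly as a well-defined linear map on the tensor product, so that applying it to both sides of the decomposition is legitimate. Everything after that is routine.
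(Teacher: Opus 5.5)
Your proof is correct and is essentially the paper's argument: the paper expands $B^{(r)}$ and $C^{(r)}$ in the standard basis and compares the coefficient of $E_{j,k}\otimes E_{i,k}$ in the last two factors to obtain $E_{i,j}=\sum_r \beta^{(r)}_{j,k}\gamma^{(r)}_{i,k}A^{(r)}$, which is exactly your contraction $\phi_{Y,Z}$ with $Y=E_{j,k}$ and $Z=E_{i,k}$. Writing it as an explicit linear map, as you do, just makes it more transparent why comparing coefficients is legitimate.
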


\begin{proof}
    An alternative proof appears as Lemma~5.1 of Arai et al.~\cite{arai2024adaptive}.
    
    Let $R=|S|$. We prove the claim for the $A$-components; the arguments for the $B$- and $C$-components are symmetric.

    Write each $B^{(r)}$ and $C^{(r)}$ in the standard basis: \[ B^{(r)} = \sum_{j,k'=1}^{n} \beta^{(r)}_{j,k'}E_{j,k'}, \qquad C^{(r)} = \sum_{i,k=1}^{n} \gamma^{(r)}_{i,k}E_{i,k}. \]

    By Lemma~\ref{lemma:mmt} and the assumption that $S$ decomposes $M^{(n)}$, \[ \begin{aligned} \sum_{i,j,k=1}^{n} E_{i,j}\otimes E_{j,k}\otimes E_{i,k} &= \sum_{r=1}^{R} A^{(r)}\otimes B^{(r)}\otimes C^{(r)} \\ &= \sum_{r=1}^{R} A^{(r)} \otimes \left( \sum_{j,k'=1}^{n} \beta^{(r)}_{j,k'}E_{j,k'} \right) \otimes \left( \sum_{i,k=1}^{n} \gamma^{(r)}_{i,k}E_{i,k} \right) \\ &= \sum_{j,k',i,k=1}^{n} \left( \sum_{r=1}^{R} A^{(r)} \beta^{(r)}_{j,k'} \gamma^{(r)}_{i,k} \right) \otimes E_{j,k'}\otimes E_{i,k}. \end{aligned} \]

    Now fix $i$ and $j$, and choose any $k \in [n]$. Comparing the coefficients of $E_{j,k}\otimes E_{i,k}$ in the last two tensor factors, we get

    \[ E_{i,j} = \sum_{r=1}^{R} A^{(r)} \beta^{(r)}_{j,k} \gamma^{(r)}_{i,k}. \] Thus every basis matrix $E_{i,j}$ lies in $\operatorname{span}_{\mathbb{F}}\{A^{(r)} : r\in[R]\}$, so the $A$-components span all of $\mathbb{F}^{n\times n}$.
\end{proof}

\begin{remark}
\label{rem:delete-identical-pair}
Over $\mathbb{F}_2$, two identical summands can be deleted using a flip. Indeed, if two copies of $A\otimes B\otimes C$ appear in a scheme, then a flip on the $A$ position produces \[ A\otimes B\otimes (C+C) \qquad\text{and}\qquad A\otimes (B-B)\otimes C, \] both of which are zero over $\mathbb{F}_2$ and are therefore deleted.
\end{remark}

\begin{lemma} \label{lemma:add-duplicate-sum} 
Let $S$ be a scheme over $\mathbb{F}_2$, and suppose \[ A\otimes B\otimes C, \qquad A'\otimes B'\otimes C' \] are summands of $S$. Let $X=A+A'$. Then there is a path from $S$ to \[ S \uplus \left\{ X\otimes B\otimes C,\; X\otimes B\otimes C \right\}.\] Analogous statements hold for the $B$ and $C$ positions. \end{lemma}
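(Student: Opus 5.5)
The plan is to realize the two copies of $X\otimes B\otimes C$ with exactly two plus transformations on the $A$ position. The first one splits $A\otimes B\otimes C$ using $A'$; the second one reassembles $A\otimes B\otimes C$ and creates the second copy of $X$ as a by-product. Throughout, I use that over $\mathbb{F}_2$ subtraction equals addition, so $A-A'=X$ and $A'-X=A$.

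\textbf{Degenerate case.} First I dispose of the case $X=0$, i.e.\ $A=A'$. Then $X\otimes B\otimes C=0$. Zero summands are deleted by convention (Definition~\ref{def:vertices} requires nonzero rank-one tensors), so the target multiset is just $S$ and the empty path works. From now on assume $X\neq 0$. Note also that $A,A',B,C$ are nonzero, since summands of a scheme are nonzero.

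\textbf{First plus.} Apply a plus on the $A$ position with the summand $A'\otimes B'\otimes C'$ in the role of the first summand and $A\otimes B\otimes C$ in the role of the second. By Definition~\ref{def:edges}, this replaces $A\otimes B\otimes C$ by $(A-A')\otimes B\otimes C = X\otimes B\otimes C$ and $A'\otimes B\otimes C$. Both are nonzero, so nothing is deleted, and we reach
\[
S_1 = \bigl(S\setminus\{A\otimes B\otimes C\}\bigr)\uplus\{X\otimes B\otimes C,\; A'\otimes B\otimes C\}.
\]

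\textbf{Second plus.} Apply a plus on the $A$ position with $X\otimes B\otimes C$ as the first summand and $A'\otimes B\otimes C$ as the second. This replaces $A'\otimes B\otimes C$ by $(A'-X)\otimes B\otimes C = A\otimes B\otimes C$ and $X\otimes B\otimes C$. Again both are nonzero, so nothing is deleted, and the result is
\[
S_2 = \bigl(S\setminus\{A\otimes B\otimes C\}\bigr)\uplus\{X\otimes B\otimes C,\; A\otimes B\otimes C,\; X\otimes B\otimes C\} = S\uplus\{X\otimes B\otimes C,\; X\otimes B\otimes C\},
\]
which is the claimed target.

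\textbf{Remaining points.} The analogous statements for the $B$ and $C$ positions follow by the same two plus moves applied in those positions. The only step needing care, and the one I regard as the main (mild) obstacle, is the bookkeeping of the zero-deletion rule. Nonvanishing of every intermediate summand relies on $A,A'\neq 0$ and $X\neq 0$, which is why the case $X=0$ is split off first.
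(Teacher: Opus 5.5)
Your proof is correct, and it takes a shorter route than the paper's. The paper splits into cases. For $B\neq B'$ it uses a four-move derivation:
\begin{itemize}
\item a plus on $A$ pivoted at $A'\otimes B'\otimes C'$;
\item a plus on $B$ turning $X\otimes B\otimes C$ into $X\otimes(B+B')\otimes C$ and $X\otimes B'\otimes C$;
\item a plus on $A$ pivoted at $X\otimes(B+B')\otimes C$ that restores $A\otimes B\otimes C$ and yields one copy of $X\otimes B\otimes C$;
\item a flip on $A$ that collapses $X\otimes(B+B')\otimes C$ and $X\otimes B'\otimes C$ into the second copy.
\end{itemize}
The paper uses exactly your two-move argument only in the special case $B=B'$.

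What your second step relies on is that the plus move, as defined in the paper, reads only the $A$-component of the pivot. So pivoting at $X\otimes B\otimes C$ to split $A'\otimes B\otimes C$ needs no relation between $B$ and $B'$. The paper's $B=B'$ argument therefore already works in general, and the case split, the auxiliary plus on $B$, and the final flip are all unnecessary. The paper's own third move, which pivots at $X\otimes(B+B')\otimes C$ while splitting a summand with $B$-component $B$, already illustrates that the pivot's other components are irrelevant.

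Your version gives a uniform proof with a path of length two that uses only plus edges in a single position. The paper's longer route gains nothing extra under the stated definition. You should state explicitly that the pivot's $B$- and $C$-components play no role in a plus move, since that is exactly where your argument departs from the paper's.
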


\begin{figure}[ht!]
\centering
\resizebox{\linewidth}{!}{%
\begin{tikzpicture}[x=1cm,y=1cm]
\tikzset{
  term/.style={
    inner xsep=2pt,
    inner ysep=1pt,
    font=\footnotesize,
    fill=white,
    text=black
  },
  op node/.style={
    fill=white,
    inner sep=1.2pt,
    font=\scriptsize,
    text=black
  },
  gen/.style={
    -{Latex[length=1.45mm,width=0.95mm]},
    draw=black,
    line width=0.45pt
  },
  carry/.style={
    -{Latex[length=1.15mm,width=0.75mm]},
    draw=black!27,
    line width=0.30pt
  }
}

\newcommand{\tensor}[3]{#1\otimes #2\otimes #3}

\node[term] (s0abc)     at (0.0, 0.34) {$\tensor{A}{B}{C}$};
\node[term] (s0apbpcp)  at (0.0,-0.34) {$\tensor{A'}{B'}{C'}$};

\node[term] (s1apbpcp)  at (3.45, 0.68) {$\tensor{A'}{B'}{C'}$};
\node[term] (s1xbc)     at (3.45, 0.00) {$\tensor{X}{B}{C}$};
\node[term] (s1apbc)    at (3.45,-0.68) {$\tensor{A'}{B}{C}$};

\node[term] (s2apbpcp)  at (6.90, 1.02) {$\tensor{A'}{B'}{C'}$};
\node[term] (s2xbbpc)   at (6.90, 0.34) {$\tensor{X}{(B+B')}{C}$};
\node[term] (s2xbpc)    at (6.90,-0.34) {$\tensor{X}{B'}{C}$};
\node[term] (s2apbc)    at (6.90,-1.02) {$\tensor{A'}{B}{C}$};

\node[term] (s3apbpcp)  at (11.05, 1.36) {$\tensor{A'}{B'}{C'}$};
\node[term] (s3abc)     at (11.05, 0.68) {$\tensor{A}{B}{C}$};
\node[term] (s3xbbpc)   at (11.05, 0.00) {$\tensor{X}{(B+B')}{C}$};
\node[term] (s3xbpc)    at (11.05,-0.68) {$\tensor{X}{B'}{C}$};
\node[term] (s3xbc)     at (11.05,-1.36) {$\tensor{X}{B}{C}$};

\node[term] (s4apbpcp)  at (14.85, 1.02) {$\tensor{A'}{B'}{C'}$};
\node[term] (s4abc)     at (14.85, 0.34) {$\tensor{A}{B}{C}$};
\node[term] (s4xbc0)    at (14.85,-0.34) {$\tensor{X}{B}{C}$};
\node[term] (s4xbc1)    at (14.85,-1.02) {$\tensor{X}{B}{C}$};

\node[op node] (pAone) at (1.725, 0.00) {$P_A$};
\node[op node] (pB)    at (5.00,  0.34) {$P_B$};
\node[op node] (pAtwo) at (8.95,  0.00) {$P_A$};
\node[op node] (fA)    at (13.00, -0.34) {$F_A$};

\draw[carry] (pAone.east)     to[out=18,in=180] (s1apbpcp.west);
\draw[carry] (pB.east)        to[out=18,in=180] (s2apbpcp.west);
\draw[carry] (s1apbc.east)    to[out=0,in=180] (s2apbc.west);
\draw[carry] (s2apbpcp.east)  to[out=0,in=180] (s3apbpcp.west);
\draw[carry] (pAtwo.east)     -- (s3xbbpc.west);
\draw[carry] (s2xbpc.east)    to[out=0,in=180] (s3xbpc.west);
\draw[carry] (s3apbpcp.east)  to[out=0,in=180] (s4apbpcp.west);
\draw[carry] (s3abc.east)     to[out=0,in=180] (s4abc.west);
\draw[carry] (s3xbc.east)     to[out=0,in=180] (s4xbc1.west);

\draw[gen] (s0abc.east)       to[out=0,in=150] (pAone.west);
\draw[gen] (s0apbpcp.east)    to[out=0,in=210] (pAone.west);
\draw[gen] (pAone.east)       to[out=0,in=190] (s1xbc.west);
\draw[gen] (pAone.east)       to[out=-8,in=170] (s1apbc.west);

\draw[gen] (s1apbpcp.east)    to[out=0,in=150] (pB.west);
\draw[gen] (s1xbc.east)       to[out=0,in=210] (pB.west);
\draw[gen] (pB.east)          to[out=0,in=185] (s2xbbpc.west);
\draw[gen] (pB.east)          to[out=-10,in=175] (s2xbpc.west);

\draw[gen] (s2xbbpc.east)     to[out=0,in=150] (pAtwo.west);
\draw[gen] (s2apbc.east)      to[out=0,in=210] (pAtwo.west);
\draw[gen] (pAtwo.east)       to[out=14,in=190] (s3abc.west);
\draw[gen] (pAtwo.east)       to[out=-10,in=170] (s3xbc.west);

\draw[gen] (s3xbbpc.east)     to[out=0,in=150] (fA.west);
\draw[gen] (s3xbpc.east)      to[out=0,in=210] (fA.west);
\draw[gen] (fA.east)          -- (s4xbc0.west);
\end{tikzpicture}%
}
\caption{Derivation of $S_{A+}$ in Lemma~\ref{lemma:add-duplicate-sum}. Here, $X$ denotes $A+A'$, $P_A$ denotes a plus on the $A$ position of its inputs, and $F_A$ denotes a flip on the $A$ position.}
\label{fig:s-a-plus-derivation}
\end{figure}

\begin{proof}
If $X=0$, then the two added summands are zero and are deleted, so the trivial path suffices. Hence assume $X\neq 0$. We suppress all summands not involved in the transformations.

Figure~\ref{fig:s-a-plus-derivation} shows a visual derivation when $B\neq B'$. In prose, starting with \[ A\otimes B\otimes C, \qquad A'\otimes B'\otimes C', \] apply a plus on the $A$ position, using $A'\otimes B'\otimes C'$ as the pivot and replacing $A\otimes B\otimes C$. This gives \[ A'\otimes B'\otimes C', \qquad X\otimes B\otimes C, \qquad A'\otimes B\otimes C. \] Next, apply a plus on the $B$ position, using $A'\otimes B'\otimes C'$ as the pivot and replacing $X\otimes B\otimes C$. This gives

\[ A'\otimes B'\otimes C', \qquad X\otimes (B+B')\otimes C, \qquad X\otimes B'\otimes C, \qquad A'\otimes B\otimes C. \]

Now apply a plus on the $A$ position, using $X\otimes (B+B')\otimes C$ as the pivot and replacing $A'\otimes B\otimes C$. Since $A'+X=A$, this gives \[ A'\otimes B'\otimes C', \qquad A\otimes B\otimes C, \qquad X\otimes (B+B')\otimes C, \qquad X\otimes B'\otimes C, \qquad X\otimes B\otimes C. \]

Finally, flip the two summands \[ X\otimes (B+B')\otimes C \qquad\text{and}\qquad X\otimes B'\otimes C \] on the $A$ position. This produces one zero summand and one copy of $X\otimes B\otimes C$. Thus the net effect is to restore the two original summands and add two copies of $X\otimes B\otimes C$.

It remains to handle $B=B'$. After the first plus on the $A$ position(see Figure ~\ref{fig:s-a-plus-derivation}), we have \[ A'\otimes B\otimes C', \qquad X\otimes B\otimes C, \qquad A'\otimes B\otimes C. \]

Apply a plus on the $A$ position, using $X\otimes B\otimes C$ as the pivot and replacing $A'\otimes B\otimes C$. Since $A'+X=A$, this gives \[ A'\otimes B\otimes C', \qquad X\otimes B\otimes C, \qquad A\otimes B\otimes C, \qquad X\otimes B\otimes C. \] This again restores the two original summands and adds two copies of $X\otimes B\otimes C$.
\end{proof}

\begin{lemma}\label{lem:add-arbitrary}
    Let $S$ be a scheme of $M^{(n)}$ over $\mathbb{F}_2$.
    For arbitrary nonzero $X,Y,Z$, there is a path from $S$ to
    \[
        S \uplus
        \{
            X\otimes Y\otimes Z,\;
            X\otimes Y\otimes Z
        \},
    \]
    in the matrix multiplication flip graph.
\end{lemma}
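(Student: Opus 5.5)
We build the pair $X\otimes Y\otimes Z$ one factor at a time, using Lemma~\ref{lemma:add-duplicate-sum} to create new summands, Remark~\ref{rem:delete-identical-pair} to delete duplicate pairs, and Lemma~\ref{lemma:span} to guarantee that every nonzero matrix is a sum of components appearing in $S$.

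\textbf{Step 1: changing one position.} Let $T$ be any scheme containing a pair $U\otimes V\otimes W,\ U\otimes V\otimes W$. We claim that for any $U'$ with $U'\neq 0$ there is a path from $T$ to $T\setminus\{U\otimes V\otimes W, U\otimes V\otimes W\}\uplus\{U'\otimes V\otimes W, U'\otimes V\otimes W\}$. Write $U'+U=\sum_{t=1}^m P_t$ with each $P_t$ an $A$-component of $S$, which is possible by Lemma~\ref{lemma:span}. Apply Lemma~\ref{lemma:add-duplicate-sum} on the $A$ position to one copy $U\otimes V\otimes W$ and a summand $P_1\otimes\cdot\otimes\cdot$ of $S$. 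This adds two copies of $(U+P_1)\otimes V\otimes W$. Then use Remark~\ref{rem:delete-identical-pair} to delete the old pair $U\otimes V\otimes W$. Iterating over $t$, each time using a copy of the current pair together with $P_t$, replaces the pair by a pair whose $A$-component is $U+P_1+\cdots+P_m=U'$. The summands of $S$ are left untouched throughout.

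Two points need checking. First, Lemma~\ref{lemma:add-duplicate-sum} needs the two summands to be distinct elements of the multiset. This holds because the pivot is an original summand of $S$ while the other is a copy of the current pair, so they are different elements. Second, no intermediate partial sum $U+P_1+\cdots+P_t$ may be zero, since a zero pair is silently deleted and the process cannot continue from it. We handle this by reordering the $P_t$. If a partial sum vanishes, we can replace the representation: insert some extra component $Q$ twice, since $Q+Q=0$ over $\mathbb{F}_2$, at a suitable point. This requires a nonzero component $Q\neq U$, which exists because the components span $\mathbb{F}_2^{n\times n}$ and $n\ge 2$. Alternatively, choose an ordering greedily, always adding a $P_t$ that does not cancel the current sum. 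The final sum is $U'\neq 0$, and cancellation can only happen when the current sum equals $P_t$, so a greedy order works unless just one term remains, where no cancellation is possible.

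\textbf{Step 2: creating the first pair.} Take any summand $A\otimes B\otimes C$ of $S$ and a second summand $A'\otimes B'\otimes C'$ with $A'\neq A$. Such a summand exists because the $A$-components span a space of dimension $n^2\ge 4$, so they are not all equal. Lemma~\ref{lemma:add-duplicate-sum} then yields a nonzero pair $(A+A')\otimes B\otimes C$ twice.

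\textbf{Step 3: finishing.} Apply Step 1 on the $A$ position to turn the pair into $X\otimes B\otimes C$. Apply its $B$- and $C$-position analogues to reach $X\otimes Y\otimes Z$. These analogues use the $B$- and $C$-versions of Lemma~\ref{lemma:add-duplicate-sum} and the spanning statements for $B$ and $C$ in Lemma~\ref{lemma:span}. The resulting scheme is $S\uplus\{X\otimes Y\otimes Z,X\otimes Y\otimes Z\}$.

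The main obstacle is the bookkeeping in Step 1. Each application of Lemma~\ref{lemma:add-duplicate-sum} must be done with a genuine pair of distinct summands, and the intermediate pairs must stay nonzero. The ordering argument handles the second point.
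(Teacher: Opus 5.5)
Your proof is correct and follows essentially the paper's route: you iterate Lemma~\ref{lemma:add-duplicate-sum} along a chain of nonzero partial sums of components supplied by Lemma~\ref{lemma:span}, discard stale pairs with Remark~\ref{rem:delete-identical-pair}, and fix the $A$, then $B$, then $C$ position. The only differences are that you seed with an auxiliary pair $(A+A')\otimes B\otimes C$ instead of starting directly from a summand $A_0\otimes B_0\otimes C_0$ with $A_0\neq X$, and that you keep partial sums nonzero by greedy ordering rather than the paper's basis expansion through $A_0$ (which appends $A_0$ a second time when needed). One point in the greedy argument needs to be made explicit: take the $P_t$ distinct, e.g.\ a subset of a basis of $A$-components. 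Otherwise two remaining terms could both equal the current sum, and then every choice would cancel.
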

\begin{proof}
We first show how to add two copies of $X\otimes B_0\otimes C_0$ for some summand $A_0\otimes B_0\otimes C_0$ of $S$.

By Lemma~\ref{lemma:span}, the $A$-components of the summands of $S$ span $\mathbb{F}_2^{n\times n}$. Since $n\geq 2$, this space has dimension at least two. Therefore we may choose a summand $A_0\otimes B_0\otimes C_0$ with $A_0\neq X$. Extend $A_0$ to a basis $\mathcal{B}$ of $\mathbb{F}_2^{n\times n}$ using $A$-components of summands of $S$.

Write $X$ in this basis. We claim that there exist basis elements $U_1,\ldots,U_t\in \mathcal{B}$, each of which is the $A$-component of some summand of $S$, such that

\[ X=A_0+U_1+\cdots+U_t \] and every partial sum \[ P_j=A_0+U_1+\cdots+U_j, \qquad j=0,1,\ldots,t, \] is nonzero.

Indeed, if the coefficient of $A_0$ in the basis expansion of $X$ is $1$, then write \[ X=A_0+U_1+\cdots+U_t \] using the remaining basis elements appearing in the expansion. Since $A_0\neq X$, we have $t\geq 1$, and every partial sum is a nonempty sum of distinct basis elements, hence nonzero. If the coefficient of $A_0$ in the expansion of $X$ is $0$, write
\[ X=U_1+\cdots+U_m \] and use the ordered expression \[ X=A_0+U_1+\cdots+U_m+A_0. \] All proper partial sums contain $A_0$ together with some subset of the other basis elements, and are therefore nonzero; the final sum is $X\neq 0$. Now apply Lemma~\ref{lemma:add-duplicate-sum} iteratively. Starting from the summand $A_0\otimes B_0\otimes C_0$, the first application adds two copies of \( P_1\otimes B_0\otimes C_0. \)
Using one of these copies together with a summand whose $A$-component is $U_2$, the next application adds two copies of \( P_2\otimes B_0\otimes C_0. \) Continuing in this way, we add two copies of \( P_j\otimes B_0\otimes C_0 \) for each $j=1,\ldots,t$, and in particular two copies of $X\otimes B_0\otimes C_0$. The duplicate pairs corresponding to the intermediate partial sums $P_1,\ldots,P_{t-1}$ can be deleted using Remark~\ref{rem:delete-identical-pair}. Thus we reach \[ S \uplus \left\{ X\otimes B_0\otimes C_0,\; X\otimes B_0\otimes C_0 \right\}. \]

We next change the $B$-component from $B_0$ to $Y$. If $B_0=Y$, there is nothing to do. Otherwise, repeat the same argument in the $B$ position, starting from one of the two copies of \(X\otimes B_0\otimes C_0\): using Lemma~\ref{lemma:span}, choose a nonzero sequence of partial sums from $B_0$ to $Y$, apply Lemma~\ref{lemma:add-duplicate-sum} in the $B$ position, and delete the duplicate pairs corresponding to the old component $B_0$ and to the intermediate partial sums. This leaves two copies of \( X\otimes Y\otimes C_0. \) If $C_0=Z$, we are done. Otherwise, repeat the same argument in the $C$ position, deleting the old duplicate pair \(X\otimes Y\otimes C_0\) after the pair with $C$-component $Z$ has been created. The result is \(S\uplus\{X\otimes Y\otimes Z,\;X\otimes Y\otimes Z\}\), as desired.
%
\end{proof}

\begin{lemma} \label{lemma:zero-subset} Let $S$ be a scheme for $M^{(n)}$ over $\mathbb{F}_2$. Let $R$ be a submultiset of $S$ such that \[ \sum_{r\in R} r = 0. \] Then there is a path from $S$ to $S\setminus R$ using only flip and plus edges.
\end{lemma}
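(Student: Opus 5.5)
The plan is to reduce $R$ to elementary rank-one pieces $E_{a}\otimes E_{b}\otimes E_{c}$ using plus edges. Over $\mathbb{F}_2$ the zero-sum condition then forces the pieces to cancel in identical pairs, and each pair can be deleted by Remark~\ref{rem:delete-identical-pair}. Throughout, the summands in $S\setminus R$ are never used as inputs to any transformation except, possibly, as bystanders. We keep track of a multiset $\widetilde R$ of ``pieces derived from $R$''. Initially $\widetilde R = R$, and the invariant is that the current scheme equals $(S\setminus R)\uplus \widetilde R$ with $\sum_{r\in\widetilde R} r=0$.

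\textbf{Step 1 (splitting off one basis matrix in the $A$ position).} Let $A'\otimes B'\otimes C'\in\widetilde R$ where $A'$ has at least two nonzero entries, and pick $E_{a}$ with $(A')_{a}=1$. By Lemma~\ref{lem:add-arbitrary}, we add a pair $E_{a}\otimes Y\otimes Z$, $E_{a}\otimes Y\otimes Z$ with arbitrary nonzero $Y,Z$; this pair lies outside $\widetilde R$. We apply a plus on the $A$ position with one copy as pivot, replacing $A'\otimes B'\otimes C'$ by
\[
(A'+E_{a})\otimes B'\otimes C' \quad\text{and}\quad E_{a}\otimes B'\otimes C' .
\]
Both pieces are nonzero because $A'$ has weight at least $2$. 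The pivot is unchanged by the plus, so the auxiliary pair is still present, and we delete it by Remark~\ref{rem:delete-identical-pair}. The net effect replaces one element of $\widetilde R$ by two elements with the same sum. This preserves the invariant and strictly lowers the total weight of the larger piece. Iterating, every element of $\widetilde R$ ends up with an $A$-component equal to a single $E_{a}$. We then do the same in the $B$ and $C$ positions, which is possible by the analogous versions of the plus edge and of Lemma~\ref{lem:add-arbitrary}. After this, every element of $\widetilde R$ has the form $E_{a}\otimes E_{b}\otimes E_{c}$. Termination follows because the sum of Hamming weights over the three components strictly decreases for the piece being split.

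\textbf{Step 2 (cancellation).} The tensors $E_{a}\otimes E_{b}\otimes E_{c}$ form a basis of the ambient tensor space, and $\sum_{r\in\widetilde R} r=0$ over $\mathbb{F}_2$. Hence each elementary triple occurs in $\widetilde R$ an even number of times. We partition $\widetilde R$ into identical pairs and delete each pair with a single flip, as in Remark~\ref{rem:delete-identical-pair}. This empties $\widetilde R$, leaving exactly $S\setminus R$.

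\textbf{Main obstacle: multiset bookkeeping.} When the scheme contains several identical copies, some belonging to $S\setminus R$ and some to $\widetilde R$, we must make sure that the transformations always act on copies we designate as belonging to $\widetilde R$ or to the auxiliary pair. This is harmless because identical copies are interchangeable in a multiset. A second point is the nonzero requirement in Lemma~\ref{lem:add-arbitrary}: here it is satisfied since $E_{a}$, $Y$ and $Z$ are nonzero. We also only split pieces whose component has weight at least $2$, so no transformation silently deletes a piece, and the invariant $\sum \widetilde R=0$ is exact at every stage. The case $R=\emptyset$ is the trivial path.
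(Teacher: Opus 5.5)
Your proposal is correct and is essentially the paper's proof: split each summand of $R$ into standard basis tensors one basis matrix at a time using helper pairs from Lemma~\ref{lem:add-arbitrary}, then cancel the even-multiplicity basis tensors in identical pairs via Remark~\ref{rem:delete-identical-pair}. The differences are cosmetic. The paper adds two copies of $E_{i_1,j_1}\otimes B\otimes C$ and flips one into the target on the $B$ position, so the other copy becomes the split-off piece; you instead use an arbitrary pair $E_a\otimes Y\otimes Z$ only as a pivot for a plus and then delete it. Also, your claim that summands of $S\setminus R$ are never used is not literally true, since the path from Lemma~\ref{lem:add-arbitrary} temporarily modifies summands of $S\setminus R$. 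It restores them, though, so your invariant still holds after each macro-step.
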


\begin{proof}
We show how to replace every summand in $R$ by a multiset of standard basis tensors, without changing the represented tensor. Since the summands in $R$ sum to zero, the resulting basis tensors will occur with even multiplicity and can then be deleted in identical pairs.

Let $A\otimes B\otimes C$ be a summand in the current copy of $R$. If $A$ is not a standard basis matrix, write \[ A=E_{i_1,j_1}+E_{i_2,j_2}+\cdots+E_{i_t,j_t}, \qquad t\geq 2. \] By Lemma~\ref{lem:add-arbitrary}, add two copies of \( E_{i_1,j_1}\otimes B\otimes C. \) Use one of these two copies in a flip on the $B$ position with $A\otimes B\otimes C$. Since the two summands have the same $B$-component and the same $C$-component, this replaces them by \( (A-E_{i_1,j_1})\otimes B\otimes C \) after deleting the zero summand. The other added copy remains. Hence the net effect is to replace \[ A\otimes B\otimes C \] by \[ (A-E_{i_1,j_1})\otimes B\otimes C \qquad\text{and}\qquad E_{i_1,j_1}\otimes B\otimes C. \] Repeating this step decomposes the $A$-component into standard basis matrices. Applying the same procedure to the $B$-components and then to the $C$-components decomposes every summand of $R$ into a multiset of standard basis tensors \( E_{i,j}\otimes E_{k,\ell}\otimes E_{p,q}. \) After decomposing all summands in $R$, the resulting scheme has the form \( (S\setminus R)\uplus R', \) where $R'$ is a multiset of standard basis tensors. Since each decomposition step preserves the represented tensor, we have \[ \sum_{r\in R'} r = \sum_{r\in R} r = 0. \] The tensors \( E_{i,j}\otimes E_{k,\ell}\otimes E_{p,q} \) are distinct standard basis tensors of $(\mathbb{F}_2^{n\times n})^{\otimes 3}$, and hence are linearly independent. Therefore every standard basis tensor appearing in $R'$ appears with even multiplicity (as their coefficient must be zero over $\mathbb{F}_2$). Pair equal tensors and delete each pair using Remark~\ref{rem:delete-identical-pair}. This removes all of $R'$ and leaves $S\setminus R$.
\end{proof}

We are now in a position to prove the main theorem.

\begin{proof}[Proof of Theorem~\ref{thm:main}] Let $S$ and $S'$ be two schemes for $M^{(n)}$ over $\mathbb{F}_2$. Starting from $S$, use Lemma~\ref{lem:add-arbitrary} to add two copies of every summand of $S'$. This gives a directed path from $S$ to \( S \uplus S' \uplus S'. \)
Since both $S$ and $S'$ sum to $M^{(n)}$, the submultiset \( R = S \uplus S' \) of $S\uplus S'\uplus S'$ sums to \( M^{(n)}+M^{(n)}=0 \) over $\mathbb{F}_2$. By Lemma~\ref{lemma:zero-subset}, there is a directed path from \( S\uplus S'\uplus S' \) to \((S\uplus S'\uplus S')\setminus (S\uplus S') = S'. \)
Thus there is a directed path from $S$ to $S'$. Since $S$ and $S'$ were arbitrary, the directed flip/plus graph is strongly connected.
\end{proof}

\section{GPU-Accelerated Search}\label{sec:gpu-accelerated-search}

This section gives an overview of our algorithm, a generalization of flip transforms suitable for a performant in-GPU implementation, and provides a comparison of this algorithm with previous GPU-accelerated flip-graph search work. We plan to open source our search implementation once this work leaves the preprint stage.

\subsection{Search Procedure}

The most performance-sensitive part of random walks on the flip graph is identifying flip opportunities, i.e. summands that match in the $A$, $B$, or $C$ position.

\[
\overbrace{A\hspace{0.4em}\otimes \underbrace{B}_{\mathclap{\text{Term at $B$ position}}}\otimes\hspace{0.4em} C}^{\text{Summand}}
\]

Over $\mathbb{F}_2$, an $8\times 8$ term fits in a single $64$-bit word,
so existing flip graph search procedures for small matrix sizes maintain, for each position $A,B,C$, a map that stores the indices of summands with a particular value at that position. For example, if summands at indices $1,2,3$ have the value $x$ for their $A$ term and $m_A$ is the map for $A$ terms, then $m_A(x)=[1,2,3]$. For very small matrix sizes, $4\times 4$ or less, this map can be implemented with an array by interpreting a term's bit-level representation as an index, whereas larger sizes require hashing. However, this approach is not GPU-friendly, as it is branchy, requires operating on sizable buffers in shared memory, and is generally difficult to adapt to the single instruction, multiple threads model used by GPUs.\footnote{At the necessary level of abstraction for this section, GPUs are composed of many processors composed of threads that all execute the same instruction at the same time. For example, two threads on a processor can simultaneously compute $x+y$, but one cannot compute $x+y$ while the other computes $x-y$, as $+$ and $-$ are different instructions. Branchy code performs poorly on GPUs because one thread cannot execute the false branch of an if statement while the other executes the true branch. Some primitives, like \texttt{\_\_match\_any\_sync}, allow threads in a warp to communicate with one another.}

Fortunately, recent CUDA GPUs have a primitive, \texttt{\_\_match\_any\_sync}, that returns a bitmask of threads within a warp that has the same values for a given variable.
For a calling thread $t$, this means
\[
\begin{aligned}
    &\bigl(\texttt{\_\_match\_any\_sync(}x_t\texttt{)}\bigr)_i=1 \\
    &\qquad\leftrightarrow\quad \text{participating thread $i$ has $x_i=x_t$}.
\end{aligned}
\]
When there are fewer summands than threads in a warp, this can be used as follows: Each thread stores a single summand. At each step, all threads agree on a term position ($A$, $B$, or $C$) and execute \texttt{\_\_match\_any\_sync} on their summand's value at that position. Ones in the resulting bitmasks correspond to flip opportunities. This means flips are identified with a hardware primitive instead of a hash table.

In practice, CUDA GPUs have $32$ threads per warp. As world-record decompositions of $M^{(4)}$ and $M^{(5)}$ involve $47$ \cite{fawzi2022discovering} and $93$ \cite{moosbauer2025flipgraphs} summands, respectively, multiple summands must be stored per thread. Flip search then becomes probabilistic: At each step, warps agree on a term position; then, for $k$ rounds, each thread picks a random summand and calls \texttt{\_\_match\_any\_sync}. Increasing $k$ increases the likelihood that available flip opportunities are found.

\subsection{Generalized Flips}\label{sec:generalized-flips}

With \texttt{\_\_match\_any\_sync} it is zero-cost to identify multiple summands that share a term, as this corresponds to more than two bits being set in the returned bitmask. Flip semantics can be generalized to operate on multiple summands and used to look ahead for rank reductions.

Let us sketch a flip opportunity as matrix multiplication over subterms, with $\otimes$ as multiplication.

\[
\begin{aligned}
&A\otimes (B^{(1)}\otimes C^{(1)} + B^{(2)}\otimes C^{(2)}) \\&= A\otimes \bigl(\underbrace{\begin{bmatrix}
    B^{(1)} & B^{(2)}
\end{bmatrix}}_{B}\underbrace{\begin{bmatrix}
    C^{(1)} \\ C^{(2)}
\end{bmatrix}}_{C}\bigr)
\end{aligned}
\]

\noindent From this perspective, applying a flip has the same effect as transforming $BC$, above, into $BGG^{-1}C$ for a particular $G$.

\[
\begin{aligned}
&A\otimes \bigl(\begin{bmatrix}
    B^{(1)} & B^{(2)}
\end{bmatrix} \overbrace{\begin{bmatrix}
    1 & 0 \\
    1 & 1
\end{bmatrix}}^{G}\overbrace{\begin{bmatrix}
    1 & 0 \\
    1 & 1
\end{bmatrix}}^{G^{-1}}\begin{bmatrix}
    C^{(1)} \\ C^{(2)}
\end{bmatrix}\bigr) \\
&= A\otimes(B^{(1)}+B^{(2)})\otimes C^{(1)} \\
&\qquad+A\otimes B^{(2)}\otimes(C^{(1)}+C^{(2)})
\end{aligned}
\]

This is easily generalized to multiple summands with a shared term and an arbitrary invertible $G$.

\[
\begin{aligned}
    &A\otimes (\sum_{i=1}^N B^{(i)}\otimes C^{(i)})
    \\&= A\otimes \bigl(\begin{bmatrix}
    B^{(1)} & \dots & B^{(N)}
\end{bmatrix}GG^{-1}\begin{bmatrix}
    C^{(1)} \\ \dots \\ C^{(N)}
\end{bmatrix}\bigr)
\end{aligned}
\]

\noindent A consequence of this framing is that if the $B^{(i)}$ or $C^{(i)}$ terms are not linearly independent, then an appropriate choice of $G$ results in a zero in the $B$ or $C$ vectors, and a zero corresponds to a rank reduction. Thus, an appropriate choice of $G$ can skip ahead in the search space to find rank reductions that may require many pairwise flip and plus transforms to set up the needed linear combination.

It remains to discuss how to efficiently determine on a GPU whether $B^{(1)},\dots,B^{(N)}$ are linearly dependent. Sets of summands sharing a term appear to be overwhelmingly of size $\leq 5$ on the $5\times 5$ flip graph: among $43,386$ sets of more than two summands sharing a term in $3,273$ schemes sampled from random walks, $99.93\%$ had size $\leq 5$; Figure~\ref{fig:sizedist} visualizes this. Over $\mathbb{F}_2$, determining whether $B^{(1)},\dots,B^{(5)}$ are linearly dependent reduces to determining whether there is a nonempty subset whose sum (i.e. XOR, i.e. $\oplus$) is zero. Thus, $5$ is a useful number: sets of size $5$ have $31$ nonempty subsets, and warps have $32$ threads, so we can assign each subset to a thread.

\begin{figure}[h]
    \centering
\begin{tikzpicture}[x=0.62cm,y=0.55cm]
  \draw[black,thin] (0,0) -- (0,4.55);

  \foreach \y/\lab in {0/$10^0$,1/$10^1$,2/$10^2$,3/$10^3$,4/$10^4$} {
    \draw[black,thin] (0,\y) -- (-0.10,\y);
    \node[anchor=east,font=\small] at (-0.16,\y) {\lab};
  }

  \draw[fill=white,draw=black,thin] (0,0) rectangle (1,4.425);
  \draw[fill=white,draw=black,thin] (1,0) rectangle (2,3.759);
  \draw[fill=white,draw=black,thin] (2,0) rectangle (3,4.042);
  \draw[fill=white,draw=black,thin] (3,0) rectangle (4,0.903);
  \draw[fill=white,draw=black,thin] (4,0) rectangle (5,0.301);
  \draw[fill=white,draw=black,thin] (5,0) rectangle (6,1.000);
  \draw[fill=white,draw=black,thin] (6,0) rectangle (7,0.301);
  \draw[fill=white,draw=black,thin] (7,0) rectangle (8,0.301);
  \draw[fill=white,draw=black,thin] (8,0) rectangle (9,0.699);

  \foreach \x/\lab in {0.5/3,1.5/4,2.5/5,3.5/6,4.5/7,5.5/8,6.5/9,7.5/10,8.5/16} {
    \node[anchor=north,font=\small] at (\x,-0.08) {\lab};
  }
\end{tikzpicture}
    \caption{Distribution of sizes of sets of summands sharing a term from $3,273$ schemes sampled from random walks on the $5\times 5$ flip graph.}
    \label{fig:sizedist}
\end{figure}

More precisely, for a group of $N\leq 5$ summands and each thread index $0\leq s<2^N$, let $D_s\subseteq\{1,\dots,N\}$ be the set of one-based positions of the set bits of $s$. Threads $s$ with $1\leq s<2^N$ compute
\[
\bigoplus_{i\in D_s}B^{(i)}=0.
\]
A warp ballot then finds the lowest-index thread for which the above is true. If one exists, call it thread $m$, and let $r=\min D_m$ be the least index in its selected subset. The threads rewrite their summands from
\[
\sum_{i=1}^{N} A\otimes B^{(i)}\otimes C^{(i)}
\quad\text{to}\quad
\sum_{\substack{i=1\\i\neq r}}^{N} A\otimes B^{(i)}\otimes C'^{(i)}
\]
where
\[
C'^{(i)}=\begin{cases}
    C^{(i)}\oplus C^{(r)} \quad &i\in D_m, i\neq r \\
    C^{(i)} \quad &i\notin D_m
\end{cases}
\]
and the summand indexed by $r$ is removed, which is implemented by setting it to zero.

To verify correctness,
\[
\begin{aligned}
    \sum_{\substack{i=1\\i\neq r}}^{N} A\otimes B^{(i)}\otimes C'^{(i)}
    &= \left(\sum_{\substack{i=1\\i\neq r}}^{N} A\otimes B^{(i)}\otimes C^{(i)}\right) \oplus \sum_{\substack{i\in D_m\\i\neq r}} A\otimes B^{(i)}\otimes C^{(r)} \\
    &= \left(\sum_{\substack{i=1\\i\neq r}}^{N} A\otimes B^{(i)}\otimes C^{(i)}\right) \oplus A\otimes B^{(r)}\otimes C^{(r)} \\
    &= \sum_{i=1}^{N} A\otimes B^{(i)}\otimes C^{(i)}.
\end{aligned}
\]

Note that generalized flips do not move search outside the ordinary flip graph. This can easily be seen to be the case as a generalized flip moves to a decomposition of $M^{(n)}$ and, per Theorem~\ref{thm:main}, this is reachable in the regular flip graph. We can even reason about how many pairwise flips a generalized flip skips.



\begin{lemma} \label{lem:generalized-flips-decompose} 
Let \[ T=\sum_{i=1}^{N} A\otimes B_i\otimes C_i \]
be a collection of summands sharing the first tensor factor. Let $G\in \mathrm{GL}_N(\mathbb{F}_2)$, and define
\[ B'_j=\sum_{i=1}^{N} B_iG_{i,j}, \qquad C'_j=\sum_{i=1}^{N} (G^{-1})_{j,i}C_i. \]
Then the generalized flip \[ \sum_{i=1}^{N} A\otimes B_i\otimes C_i \longmapsto \sum_{j=1}^{N} A\otimes B'_j\otimes C'_j \]
can be realized as a sequence of ordinary pairwise flips. In particular, a generalized flip is a path in the ordinary flip graph. Moreover, the path has length $O(N^2)$.
\end{lemma}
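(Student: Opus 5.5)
Over $\mathbb{F}_2$, $\mathrm{GL}_N(\mathbb{F}_2)$ is generated by elementary transvections $I+e_{k\ell}$ with $k\neq\ell$, because Gaussian elimination needs no scalings and no explicit row swaps. I will therefore factor $G = G_1G_2\cdots G_m$ into transvections and show that each factor is exactly one ordinary flip on the $A$ position.

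For a single transvection $G=I+e_{k\ell}$ we have $G^{-1}=G$, since $e_{k\ell}^2=0$ over $\mathbb{F}_2$. Then $B'_j=B_j$ for $j\neq \ell$, $B'_\ell = B_\ell + B_k$, and $C'_k = C_k + C_\ell$, with all other $C'$ unchanged. This is precisely the flip of Definition~\ref{def:edges} applied to the two summands $A\otimes B_k\otimes C_k$ and $A\otimes B_\ell\otimes C_\ell$ sharing $A$, producing $A\otimes B_k\otimes(C_k+C_\ell)$ and $A\otimes(B_\ell+B_k)\otimes C_\ell$. Only the convention of which summand plays the unprimed role needs matching, and over $\mathbb{F}_2$ signs are irrelevant.

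Composition works because if $(B,C)\mapsto(BG_1,G_1^{-1}C)$ and then $G_2$ is applied, the result is $(BG_1G_2,(G_1G_2)^{-1}C)$. So applying the flips for $G_1,\dots,G_m$ in order realizes $G$. The $A$-factor is shared throughout, so each step is always a legal flip on the $A$ position.

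For the length bound, I reduce $G$ to $I$ by column operations (right multiplication by transvections). For each column $j$, first make the pivot entry $1$ by adding some column $j'>j$ with a $1$ in row $j$; this costs one operation instead of a swap. Then clear the other entries of row $j$ by adding column $j$ to them. This uses at most $N$ operations per pivot, hence $O(N^2)$ total, and inverting gives $G$ as a product of $O(N^2)$ transvections.

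The main obstacle is the zero-summand caveat. An intermediate flip may produce a zero term, which the flip definition deletes, and the remaining sequence then refers to a missing summand. I would handle this by noting that a zero $B'_\ell$ or $C'_k$ can only occur when the intermediate vectors are dependent. In that case I would either treat the deleted summand as implicitly present with a zero factor, since all later operations on it produce consistent zero contributions, or reorder the transvections so that deletions occur only at the final steps. I expect that making this bookkeeping rigorous takes the most care. In the $\mathbb{F}_2$ setting of the generalized flips of \S\ref{sec:generalized-flips}, zeros appearing only at the end is exactly the intended rank reduction.
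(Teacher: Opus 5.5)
Your core argument is the paper's proof. You factor $G$ into $O(N^2)$ transvections by column elimination, and each transvection $I+e_{k\ell}$ is a single flip on the $A$ position between summands $k$ and $\ell$; the paper's ``add column $q$ to column $p$'' is your case $(k,\ell)=(q,p)$. The paper never mentions deleted zero summands, so you have found the real weak point. However, neither of your two fixes closes it.

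Your first fix, treating a deleted summand as implicitly present, fails because a deleted summand is not inert. Suppose term $\ell$ was deleted because $B_\ell=0$, and a later transvection adds column $k$ to column $\ell$. Then $B_\ell$ becomes $B_k$ and $C_k$ becomes $C_k+C_\ell$. The actual multiset changes from $\{A\otimes B_k\otimes C_k\}$ to $\{A\otimes B_k\otimes(C_k+C_\ell),\ A\otimes B_k\otimes C_\ell\}$. The represented tensor is still correct, but the number of summands goes up, and the step uses the ``ghost'' factor $C_\ell$, which no longer occurs in the scheme. So the step is not a flip, since flips never increase the number of summands.

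Your second fix, reordering the transvections, cannot work in general, because the statement is false for arbitrary $G$. Take $N=3$ and summands $A\otimes b\otimes c_1$, $A\otimes b\otimes c_2$, $A\otimes b\otimes c_3$ with $c_1,c_2,c_3$ linearly independent. Let $G=I+e_{13}+e_{23}$, which satisfies $G^{-1}=G$. Then $B'=(b,b,b)$ and $C'=(c_1+c_3,\,c_2+c_3,\,c_3)$, so the target is three nonzero summands forming a different multiset. Any two of the original summands share both $A$ and $b$, so every flip among them has an output with a factor $b+b=0$ or $A+A=0$ and deletes a summand. Since flips never increase the count, no sequence of flips on these summands reaches the target.

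What your argument, and the paper's, actually proves is the case where no intermediate term vanishes. One sufficient condition is that $B_1,\dots,B_N$ are linearly independent and $C_1,\dots,C_N$ are linearly independent; then $BH$ and $H^{-1}C$ have no zero entries for any invertible $H$.

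The generalized flip actually implemented in \S\ref{sec:generalized-flips} can be handled directly:
\begin{itemize}
\item The lowest-index zero-XOR subset $D_m$ is a minimal dependent set, since each of its proper nonempty subsets has a smaller thread index.
\item Flip summand $r$ against each $i\in D_m\setminus\{r\}$ in turn. The running $B_r$ is the XOR over a proper nonempty subset of $D_m$, so it stays nonzero until the last step.
\item Any summand $i$ deleted because $C_i+C_r=0$ is never used again.
\end{itemize}
The lemma therefore needs an extra hypothesis of this kind.
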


\begin{proof}
    It suffices to prove the claim when $G$ is an elementary matrix, since $\mathrm{GL}_N(\mathbb{F}_2)$ is generated by elementary column additions.

    Consider the elementary operation that adds column $q$ to column $p$ in the list of $B$-factors. That is, in matrix form, this replaces
    \[ B_p \quad \text{by} \quad B_p+B_q \] and leaves the other $B_i$ unchanged. To preserve the represented tensor, the inverse elementary operation should be applied to the $C$-factors. The corresponding update is 
    \[ C_q \quad \text{by} \quad C_q+C_p, \] with all the $C_i$ unchanged.

    But this is just an ordinary pairwise flip on the two summands 
    \[ A\otimes B_p\otimes C_p, \qquad A\otimes B_q\otimes C_q. \]

    Indeed, since the two summands share the same $A$ factor, a flip on the $A$ position replaces them by two summands whose non-shared factors have undergone precisely this elementary change of basis and the corresponding inverse update.

    Therefore each elementary column addition in a factorization of $G$ can be implemented by one ordinary pairwise flip. Since any matrix in $\mathrm{GL}_N(\mathbb{F}_2)$ can be reduced to the identity by Gaussian elimination using $O(N^2)$ elementary additions, $G$ can be expressed a product of $O(N^2)$ elementary additions. Applying the corresponding pairwise flips realizes the generalized flip.
\end{proof}

\subsection{Rank Reduction of Generalized Flips}

We now formalize the best possible rank reduction that can be obtained from a generalized flip on a group of summands sharing one tensor factor. Suppose \[ T = \sum_{i=1}^{N} A\otimes B_i\otimes C_i \] is a group of $N$ summands that share the first factor. Consider $E=\mathbb F_2^N$ with the standard basis $e_1, \ldots, e_N,$ and define linear maps \[B:E \to V_B, \qquad e_i \mapsto B_i\] and 
\[C: E \to V_C \qquad e_i \mapsto C_i.\]
Equivalently, the ``non-shared'' part of the group is the order-two tensor 
\[T_{BC} = \sum_{i=1}^N B_i \otimes C_i \in V_B \otimes V_C.\]
A generalized flip chooses the basis $g_1, \ldots, g_N$ of $E$ and the dual basis $h_1, \ldots, h_N$ of $E^\ast.$ It then rewrites
\[ \sum_{i=1}^{N} A\otimes B_i\otimes C_i \] as \[ \sum_{j=1}^{N} A\otimes B(g_j)\otimes C(h_j), \] where we identify $C$ with the induced map $E^\ast \to V_C$ given by \[ h\mapsto \sum_{i=1}^{N} h(e_i)C_i. \] 
Equivalently, if $G\in \mathrm{GL}_N(\mathbb{F}_2)$ has columns $g_1, \ldots, g_N,$ this is the transformation \[ B,C \longmapsto BG,\;G^{-1}C. \]

\begin{theorem}[Optimal Generalized Flip Reduction]
\label{thm:opt-gen-flip-reduction}
Let \[ T = \sum_{i=1}^{N} A\otimes B_i\otimes C_i \] be a group of $N$ summands sharing the first tensor factor. Among all generalized flips \[ B,C \longmapsto BG,\;G^{-1}C, \qquad G\in \mathrm{GL}_N(\mathbb{F}_2), \] the minimum possible number of non-zero surviving summands is \[ \operatorname{rank}(T_{BC}), \] where \[ T_{BC}=\sum_{i=1}^{N} B_i\otimes C_i \] is viewed as an order-two tensor, or equivalently as a linear map $V_C^\ast \to V_B.$ Therefore the maximum number of summands that can be deleted by a generalized flip on this group with a shared factor is just \[ N-\operatorname{rank}(T_{BC}). \] Analogous statements hold for groups sharing the $B$ or $C$ factor.
\end{theorem}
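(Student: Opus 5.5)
The plan is to prove a lower bound and a matching construction, both via the factorization $T_{BC} = B\circ C^{\top}$, where $B:E\to V_B$ and $C:E^\ast\to V_C$ are as above. Write $r=\operatorname{rank}(T_{BC})$.

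\emph{Lower bound.} A generalized flip produces
\[ \sum_{j=1}^{N} A\otimes B(g_j)\otimes C(h_j), \]
and this still equals $T$. Since $A\neq 0$, contracting away the first factor shows
\[ \sum_{j} B(g_j)\otimes C(h_j)=T_{BC}. \]
A summand is zero exactly when $B(g_j)=0$ or $C(h_j)=0$. The surviving summands therefore give an expression of $T_{BC}$ as a sum of rank-one order-two tensors, so their number is at least $\operatorname{rank}(T_{BC})$. This uses only that matrix rank equals tensor rank for order-two tensors.

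\emph{Construction.} I will pick the basis of $E$ in three blocks.
\begin{itemize}
\item Let $K=\ker B\subseteq E$ and $L=\ker C\subseteq E^\ast$, where $C$ is viewed on $E^\ast$.
\item The bilinear form $\beta(u,h)=$ the natural pairing of $B(u)$ and $C(h)$ (through $T_{BC}$) is not available directly. So I will instead work with the pairing $\langle\cdot,\cdot\rangle$ between $E$ and $E^\ast$, using the fact that $T_{BC}=\sum_i B(e_i)\otimes C(e_i^\ast)$ corresponds to the map $B\circ\iota\circ C^\ast$.
\end{itemize}
The key linear-algebra fact I will aim for is
\[ \operatorname{rank}(T_{BC}) = N-\dim K-\dim L+\dim(K\cap L^{\perp}) \quad\text{roughly,} \]
or, better, a direct argument, as follows.
\begin{enumerate}
\item Choose a basis of $E$ of the form $k_1,\dots,k_a$ (a basis of $K$), then $u_1,\dots,u_b$ spanning a complement of $K$ inside $L^{\perp}$-adjusted position, then the rest.
\item Take the dual basis $h_j$.
\item Count the indices $j$ for which both $g_j\notin K$ and $h_j\notin L$.
\end{enumerate}

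\emph{Cleaner route for the construction.} It is easier to reduce to a canonical form. Choose $G$ so that $BG$ has its first $N-\dim K$ columns linearly independent and its last $\dim K$ columns zero (column reduction). In that basis the summands with $B$-part zero vanish, and $T_{BC}=\sum_{j\le m} B'_j\otimes C'_j$ with the $B'_j$ independent, where $m=N-\dim K$. Then $\operatorname{rank}T_{BC}=\operatorname{rank}(C'_1,\dots,C'_m)$. Next apply a block transformation acting only on indices $\le m$, of the form $G_2\oplus I$. This column-reduces the $C'$ list, which is transformed by $G_2^{-1}$, so that exactly $r$ of the $C'_j$ are nonzero and the other $m-r$ are zero. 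The $B'$ side becomes $B'G_2$, which stays independent and is therefore nonzero. The surviving count is exactly $r$. Composing the two block flips gives a single $G\in\mathrm{GL}_N(\mathbb F_2)$.

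\emph{Main obstacle.} The step I expect to be the main obstacle is checking that the two reductions compose correctly. In particular, the second transformation must act only on the first $m$ indices so that the zero $B$-columns from the first step are not disturbed, and the action on $C$ must be the inverse-transpose rather than $G_2$ itself. Once this is verified, the remaining claims follow. The maximum deletion count is $N-r$. The $B$ and $C$ cases follow by permuting the tensor factors.
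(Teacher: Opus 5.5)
Your proof is correct. The lower bound is the same as the paper's: the surviving summands form an $s$-term decomposition of $T_{BC}$. Your construction, however, takes a different route.

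The paper argues coordinate-free. It sets $U=\ker B$ and $L=\operatorname{im}C^\ast=(\ker C)^\perp\subseteq E$ (your $L^\perp$), and factors $T_{BC}$ as $V_C^\ast\xrightarrow{C^\ast}L\xrightarrow{B|_L}V_B$ to get $\operatorname{rank}(T_{BC})=\dim L-\dim(U\cap L)$. It then takes a single basis adapted to $U\cap L\subseteq L\subseteq E$. Basis vectors in $U\cap L$ kill the $B$-factor, and the dual vectors of basis vectors outside $L$ lie in $L^\perp=\ker C$, so they kill the $C$-factor.

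You instead perform two successive Gaussian eliminations. First you make the nonzero part of the $B$-list independent, then you reduce the $C$-list only on that block via $G_2\oplus I$. This relies on the elementary fact that $\operatorname{rank}\sum_{j\le m}B'_j\otimes C'_j=\dim\operatorname{span}\{C'_j\}$ when the $B'_j$ are independent. Your check of the composition is right: $G_2\oplus I$ fixes the zero $B$-columns and the tail $C$'s, and $G=G_1(G_2\oplus I)$ has $G^{-1}=(G_2^{-1}\oplus I)G_1^{-1}$.

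What your route buys is that it avoids duals and annihilators and is literally an algorithm for an optimal $G$. That speaks to the implementation question the paper raises right after the theorem. The paper's route buys an intrinsic formula for the optimum and a one-step description of the optimal basis.

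One correction: delete the ``roughly'' identity $\operatorname{rank}(T_{BC})=N-\dim K-\dim L+\dim(K\cap L^\perp)$ from your first attempt. It is false: for $N=1$, $B_1=0$, $C_1\neq 0$ it gives $1$, while $T_{BC}=0$. In your notation the correct identity is $\operatorname{rank}(T_{BC})=\dim L^\perp-\dim(K\cap L^\perp)$, which is exactly the paper's formula. Your final argument does not depend on it, so nothing else changes.
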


\begin{proof}
    Let \[ U=\ker B\subseteq E. \] A transformed $B$-factor is zero precisely when its basis vector lies in $U.$ Next define \[ W=\{h\in E^\ast : C(h)=0\}. \] Thus a transformed $C$-factor is zero precisely when the corresponding dual basis vector lies in $W$! Let \[ L=W^\perp\subseteq E. \] Equivalently, $L$ is the image of the dual map \[C^\ast: V_C^\ast \to E.\] The order-two tensor \[ T_{BC}=\sum_{i=1}^{N} B_i\otimes C_i \] corresponds to the linear map \[ V_C^\ast \longrightarrow V_B, \qquad \varphi \longmapsto \sum_{i=1}^{N} \varphi(C_i)B_i. \] This map factors as \[ V_C^\ast \xrightarrow{C^\ast} L \xrightarrow{B|_L} V_B. \] Since $C^\ast$ has image $L,$ we have \[ \operatorname{rank}(T_{BC}) = \operatorname{rank}(B|_L) = \dim L-\dim(U\cap L). \]

    We now construct a generalized flip with exactly these many surviving summands. Choose a basis of $E$ adapted to the pair $U, L$ as follows: first choose a basis of $U \cap L,$ extend it to a basis of $L,$ and then extend it to a basis of all of $E.$ Let $g_1, \ldots, g_N$ be the resulting basis, and let $h_1, \ldots, h_N$ be the dual basis. 

    If $g_j \in U \cap L,$ then \[B(g_j)=0,\] so the $j$th transformed summand is deleted. If $g_j \not \in L,$ then the dual vector $h_j$ annihilates $L,$ hence lies in $L^\perp=W,$ and so $C(h_j)=0,$ and the $j$th transformed summand is also deleted.

    The only surviving summands correspond to those basis vectors in \[ L/(U\cap L). \] The number of surviving summands is \[ \dim L-\dim(U\cap L) = \operatorname{rank}(T_{BC}). \]

    It remains to show optimality. Every generalized flip rewrites $T_{BC}$ as a sum of surviving non-zero terms \[ B(g_j)\otimes C(h_j). \] Thus, if a generalized flip leaves $s$ non-zero summands, then $T_{BC}$ has an order-two tensor decomposition using $s$ simple tensors. Hence \[ s\geq \operatorname{rank}(T_{BC}). \] 

    The construction above achieves equality, so the minimum possible number of survivors is exactly $\operatorname{rank}(T_{BC}).$
\end{proof}

Theorem~\ref{thm:opt-gen-flip-reduction} suggests a natural implementation question: can the optimal local rank reduction be computed efficiently enough to improve a random-walk search, especially for groups larger than a single warp can enumerate by subset tests?


\section{Evaluation}
\label{sec:evaluation}

As this is preliminary work, there are gaps in this evaluation which we intend to fix in time. First, empirically we have found that some values of $k$ (the number of rounds to search for a flip) are better than others across start locations and values of $n$ in $M^{(n)}$. A careful study of this seems liable to be useful. Second, available CPU search implementations explore around 50M vertices per second on a reasonable machine, we would like to compare GPU with CPU search if we fix the \textit{cost} of the hardware rented -- for the price of an H200 one can rent many CPUs.

We do compare our implementation to a previous GPU-accelerated search procedure by Perminov \cite{perminov2025fast}. Our implementation visits between 4 and 21,000 times as many flip graph vertices per second for matrix sizes between \(3\times 3\) and \(8\times 8\), as shown in Figure~\ref{fig:throughput}. Figure~\ref{fig:rank-reduction} shows how this corresponds to better rank-reductions with a fixed time budget.

Figure~\ref{fig:rank47-h200-generalized-flips} ablates generalized flips. For discovering the current world-record scheme \cite{fawzi2022discovering} for $4\times 4$ matrix multiplication, generalized flips result in a $1.66\times$ speedup on average.

\begin{figure}[h]
    \centering
    \begin{tabular}{|l|r|r|r|}
      \hline
      Size & Ours (vertices/s) & Perminov (vertices/s) & Ratio \\
      \hline
      $3\times3$ & $1.75$\,G & $449$\,M  & ${\sim}4\times$      \\
      $4\times4$ & $1.81$\,G & $236$\,M  & ${\sim}8\times$      \\
      $5\times5$ & $1.89$\,G & $76.3$\,M & ${\sim}25\times$     \\
      $6\times6$ & $1.52$\,G & $11.9$\,M & ${\sim}127\times$    \\
      $7\times7$ & $1.32$\,G & $1.24$\,M & ${\sim}1{,}060\times$ \\
      $8\times8$ & $1.08$\,G & $51.5$\,K & ${\sim}21{,}000\times$ \\
      \hline
    \end{tabular}
    \caption{Flip-graph search throughput: ours  vs. Perminov's FlipGraphGPU \cite{perminov2025fast}, both over $\mathbb{F}_2$ from the trivial rank-$n^3$ scheme on a NVIDIA H200.}
    \label{fig:throughput}
\end{figure}

  \begin{figure}[h]
      \centering
  \begin{tikzpicture}[x=0.5cm,y=0.20cm]
    \draw[black,thin] (0,0) -- (0,15.5);

    \foreach \y in {0,5,10,15} {
      \draw[black,thin] (0,\y) -- (-0.20,\y);
      \node[anchor=east,font=\small] at (-0.30,\y) {\y};
    }

    \draw[fill=blue!75,draw=black,thin] (0,0)  rectangle (1,4);
    \draw[fill=red!75, draw=black,thin] (1,0)  rectangle (2,4);
    \draw[fill=blue!75,draw=black,thin] (3,0)  rectangle (4,15);
    \draw[fill=red!75, draw=black,thin] (4,0)  rectangle (5,15);
    \draw[fill=blue!75,draw=black,thin] (6,0)  rectangle (7,11);
    \draw[fill=red!75, draw=black,thin] (7,0)  rectangle (8,3.2);
    \draw[fill=blue!75,draw=black,thin] (9,0)  rectangle (10,4.8);
    \draw[fill=red!75, draw=black,thin] (10,0) rectangle (11,1);
    \draw[fill=blue!75,draw=black,thin] (12,0) rectangle (13,3);
    \draw[fill=red!75, draw=black,thin] (13,0) rectangle (14,0.5);
    \draw[fill=blue!75,draw=black,thin] (15,0) rectangle (16,1);

    \foreach \x/\lab in {1/3, 4/4, 7/5, 10/6, 13/7, 16/8} {
      \node[anchor=north,font=\small] at (\x,-0.4) {$\lab\times\lab$};
    }

    \draw[fill=blue!75,draw=black,thin] (8,14.2) rectangle (8.8,15);
    \node[anchor=west,font=\small] at (9,14.6) {Ours};
    \draw[fill=red!75, draw=black,thin] (8,12.6) rectangle (8.8,13.4);
    \node[anchor=west,font=\small] at (9,13.0) {Perminov \cite{perminov2025fast}};
  \end{tikzpicture}
      \caption{Rank reduction below the trivial rank-$n^3$ start within 60\,s on an NVIDIA H200 GPU.
      (Average over 4 repetitions).}
      \label{fig:rank-reduction}
  \end{figure}

  \begin{figure}[t]
\centering
\small
\setlength{\tabcolsep}{5pt}
\begin{tabular}{lrrrrrr}
\toprule
Search variant
  & Mean (s)
  & Median (s)
  & Min (s)
  & Max (s)
  & Hits \\
\midrule
Generalized flips
  & 42.13
  & 36.53
  & 7.94
  & 78.98
  & 10/10 \\
No Generalized flips
  & 69.83
  & 65.47
  & 26.33
  & 126.70
  & 10/10 \\
\bottomrule
\end{tabular}
\caption{
Wall-clock time to discover a rank-47 decomposition of the
$4 \times 4 \times 4$ matrix multiplication tensor on one NVIDIA H200. All runs started from the naive rank-64 decomposition. The aggregate columns summarize ten independent runs for each variant. Generalized flips result in a mean $1.66\times$ speedup.
}
\label{fig:rank47-h200-generalized-flips}
\end{figure}




\section{Conclusion}
\label{sec:conclusion}

We presented a GPU-accelerated procedure for searching the matrix multiplication flip graph over $\mathbb{F}_2$. The implementation replaces global data structures for detecting flip opportunities with warp-local matching operations, making random walks on the flip graph better suited to modern GPUs. We also introduced generalized multi-summand flips, which act as GPU-efficient macro-edges corresponding to paths in the ordinary flip graph and expose local rank reductions through linear dependencies among summands.

On the theoretical side, we proved that over $\mathbb{F}_2$ the directed matrix multiplication flip graph is strongly connected using only ordinary flip and plus transformations. This removes reduction edges from the connectivity argument in the binary setting.

Experimentally, our implementation achieves over a billion explored vertices per second on an NVIDIA H200 and substantially improves throughput over previous GPU-accelerated flip-graph search. Using this search procedure, we found a decomposition of the $7\times 7\times 7$ matrix multiplication tensor over $\mathbb{F}_2$ of rank $245$, improving the previous best known rank by three multiplications.

\bibliographystyle{plain}
\bibliography{references}

@inproceedings{moosbauer2025flipgraphs,
  title     = {Flip Graphs with Symmetry and New Matrix Multiplication Schemes},
  author    = {Moosbauer, Jakob and Poole, Michael J.},
  booktitle = {Proceedings of the 2025 International Symposium on Symbolic and Algebraic Computation},
  series    = {ISSAC '25},
  pages     = {233--239},
  year      = {2025},
  publisher = {Association for Computing Machinery},
  address   = {New York, NY, USA},
  doi       = {10.1145/3747199.3747566}
}

@inproceedings{arai2024adaptive,
  title     = {Adaptive Flip Graph Algorithm for Matrix Multiplication},
  author    = {Arai, Yamato and Ichikawa, Yuma and Hukushima, Koji},
  booktitle = {Proceedings of the 2024 International Symposium on Symbolic and Algebraic Computation},
  series    = {ISSAC '24},
  pages     = {292--298},
  year      = {2024},
  publisher = {Association for Computing Machinery},
  address   = {New York, NY, USA},
  doi       = {10.1145/3666000.3669701}
}

@misc{perminov2025fast,
  title         = {Fast Matrix Multiplication via Ternary Meta Flip Graphs},
  author        = {Perminov, A. I.},
  year          = {2025},
  eprint        = {2511.20317},
  archivePrefix = {arXiv},
  primaryClass  = {cs.SC},
  doi           = {10.48550/arXiv.2511.20317},
  note          = {arXiv preprint arXiv:2511.20317}
}

@inproceedings{kauers2023flip,
  title     = {Flip Graphs for Matrix Multiplication},
  author    = {Kauers, Manuel and Moosbauer, Jakob},
  booktitle = {Proceedings of the 2023 International Symposium on Symbolic and Algebraic Computation},
  series    = {ISSAC '23},
  pages     = {381--388},
  year      = {2023},
  publisher = {Association for Computing Machinery},
  address   = {New York, NY, USA},
  doi       = {10.1145/3597066.3597120}
}

@article{fawzi2022discovering,
  title   = {Discovering faster matrix multiplication algorithms with reinforcement learning},
  author  = {Fawzi, Alhussein and Balog, Matej and Huang, Aja and Hubert, Thomas and Romera-Paredes, Bernardino and Barekatain, Mohammadamin and Novikov, Alexander and Ruiz, Francisco J. R. and Schrittwieser, Julian and Swirszcz, Grzegorz and Silver, David and Hassabis, Demis and Kohli, Pushmeet},
  journal = {Nature},
  volume  = {610},
  number  = {7930},
  pages   = {47--53},
  year    = {2022},
  doi     = {10.1038/s41586-022-05172-4}
}

@misc{kauers2025meta,
  title         = {Exploring the Meta Flip Graph for Matrix Multiplication},
  author        = {Kauers, Manuel and Wood, Isaac},
  year          = {2025},
  eprint        = {2510.19787},
  archivePrefix = {arXiv},
  primaryClass  = {cs.SC},
  doi           = {10.48550/arXiv.2510.19787},
  note          = {arXiv preprint arXiv:2510.19787}
}

@misc{wood2025commutative,
  title         = {Exploring Commutative Matrix Multiplication Schemes via Flip Graphs},
  author        = {Wood, Isaac},
  year          = {2025},
  eprint        = {2506.22113},
  archivePrefix = {arXiv},
  primaryClass  = {cs.SC},
  doi           = {10.48550/arXiv.2506.22113},
  note          = {arXiv preprint arXiv:2506.22113}
}

@misc{khoruzhii2025structured,
  title         = {Faster Algorithms for Structured Matrix Multiplication via Flip Graph Search},
  author        = {Khoruzhii, Kirill and Gel{\ss}, Patrick and Pokutta, Sebastian},
  year          = {2025},
  eprint        = {2511.10786},
  archivePrefix = {arXiv},
  primaryClass  = {cs.SC},
  doi           = {10.48550/arXiv.2511.10786},
  note          = {arXiv preprint arXiv:2511.10786}
}

\appendix
\clearpage
\onecolumn
\section{The Scheme}\label{sec:the-scheme}

This block encodes our rank-245 decomposition of the
$7\times 7\times 7$ matrix multiplication tensor over $\mathbb F_2$.
For each rank-one term, its three $7\times 7$ binary coefficient matrices
are stored as seven little-endian bytes each, in $A,B,C$ order.
Within each matrix mask, bit $7i+j$ records coordinate $[i,j]$.
The resulting byte string is compressed with zlib and encoded with Base32.

\begin{Verbatim}[fontsize=\scriptsize]
BEGIN-RANK245-7X7X7-ZLIB-BASE32
PDNF2WCNRAOMOFL6KXJTHWY2QZG3PPJWVM4IPXTGBHBSFVQTVWBX3UFBNXTDAJWEN2GBO4Q4IIYE
X4CBW7WMDB22NGS3J4KCWQ4TTZFSZYQ5RYEDRQQSJAOAFBPIG3DWSEB5COPSZUDUXFQJEQHHDCG7
N5226B3JM75JWV7VL26XV5N5V7DIHPCSQAAHAHUKL45X5ECXIF77577AYEZHIQMPP4XI6RYEMYBO
L6AJY7KZB2CT46WV6MK6IJ6AKHSHWNKAMJPU7F6XCMPNQAABGBFP6IMP5LR4ESR6AHMMGPMPPYQB
DS6R3S7QEYI5GMWQOMZ6RBM5BGSH2BMPQ3SHMLFLRFF2VEUME6CHMSPE7K5ZB6CQCUI2YN77GD6H
B3SCAJXBMF2IMA5RSTVNV4NAP3JDQB4DKYPESPSHD6HR3Q2RQP4C2OWQRNWDTVSOJDLDXA4RXRP4
ZDRMH6PZQBOXQWVOLPMGUA6GSJLETQGEFV6SBFSJM5H5DJFHDND6HUT6F7KBHSBLJWJTUSRI56LS
OI4LPYLIPZYLAPKRDNDJ6KQVYA246EWXYRVLGSFXLA47TSJSYCQYGCEGPADWFKF4LAOD3HTG6I4Z
OGC4CBIOD6AZPNPH2VRQV4XWKVE6DJ4Q5ZZWDJT6D34EH3VQ7S6GJ27OP5D7KX7OLDW4FHSMWMCD
AZ2AKIYG7KY3S4OZ4T22WDKNEC6POZJL754WATBRXEF2FOZFPP6TGLTTKXB4RTFQIW3GDL7GUTZ4
ONCVK2DANYYSJMSVJ3ZLNTG7PW4ZZZKOYEJL4ZGQPUFFSBUN4TK4GGWPTLTYBFSYU7RGHIJAF5J3
GC3TZH23GZ2ADCYL6VPW3YHQEHTRIK3MFOJTEAPZTCM5C3H76OM3HZ6IUMYYHDYHHSPRPNBWFAZU
EEICHP6XUWEBKGOOXEX4H6IOUPQHA6L2JNVG2XMHION4EFEKTZ4JOA76D7TKL2UHHWGYAQ3SCSJK
NNL6MYCNOPRQB4SZOOMDVVNQKCPPSTG6DTEOQ6LM3DF7UGONL6G645RPGCPRUCAEYQ2GLUS4ZTMS
BOGB37L3MNXJNST4JHT6BPZXZRJJP5HOGFKWJEMZWJKUQ6UBIHSSYSNEGJFAVPCVTRZN4I2KJCZ2
CICXDAPH4NPGKW5TSFLGXF4KCL3LMVII43ZJOGTMDGZFD66OPVXAAJOIMFIGC46I7NZUVVFCNMNE
BJED7FBNTAUMQRWIXOQ2MOE7RCLPQ5DXHHKPOIO5PIRFUIXYTCTG4YJB6E3QKOP2IV2TO6UHJYT4
SWV232XJFOR7GV5NDN7NZL5FDVV6SES2MXWXCDRGIN243QMOLXNL2J6REA4ZEOXYENNJZKNJJACX
HN6FD2E4B67XRUABWVCODF25K2YDO3ETPJCNURLJH6SVRKLDGQHM4QLREOPECBTRNYBJ2OHVWGNZ
7C2BCRVASZ74EVTZV3A3WIRAHJ5W7YQJHDLA5WLANVLWZF5LYMV462OAWI3QW7NZYD4YCUEVJM2M
LKSFVTREZEMNXNLJV6PD2ZI6WR2VFU4UOZPB2CBYEFYALU7DDKWPMJQWORQHVAJGGUZQFKMZVKS6
BINCGZYUIE4GZXOH35UIJOJW7A45FYFMLHVWXKOYT3QAFUBA7F43N3OXXYNJD2OTQV2HSZBFAGJK
VXWRRMVFRSTNFR3NDDVLNIC4DH3OCCGBNXVTFUWH7JZQFEKUNLTK3ARKNPOJ7756MYHFDUDWNQ3R
MCBKRAFEG2UQ6N3GHU2RLFASJMA3ROPSYMNEL4W4CQYEW5Y5U7VMAD2E4YF7EPCFB4Z5BH75VZPF
TOSVVYYGZTFJXRBPMERYPNY7OEQYEH5BEMBLCQVYLKSSCKOILUKETVLVJAKWM6UMAUGN4TQQSUDW
435HIKKDJOJHYC34UZS5KVQOK5HYWRSBBB5XQ6H2WCZXMAN7FOWPBYVR37TOCSIYJ3FTEWAWRKJA
4HDRWELMY6ZH5465OOC4KSN3PFL6ELGAPAB3PQ354U54VCNW4MPU7QWD76EWLG354ZHRK3DLKJCB
PWC2KXOIYJ6V6LHHBRCW2RT3DIZE6BFDS77F7MPCEG3SWYYLFIHNA7LFCRHBIFEGSKYGMBKU5SUK
YA4BYFF4QIQPK54ZSP5DWSSNSKFSINL74I5HSDRIE5JSHDS6QBEQJS7Z6URINVA4TOS4J5RIC6LY
AO25P3DS6RYEWIDY6B4WYH2GB7IKQKSC4BR2E5XOV2XI7JWGIKRFNS3BXE2BPFRYG444RXRNIGPL
J674OIH6BNUISCVWQBHVASYKPMYTTEBOFUS4S5AIKGMLVSQRZAZHKYJHCOKSVANEKQCSJJOKKHRY
5OM7RLIF25WZPUBYX7HHPX2Y53JOYVUDVFVXKTIEDSAKSIAUHWDTPJLFFO4UFS7BGOK5RTD2YUNY
WOWS4IZLUQSPTRZGI4ELJZEMLFCITZNGJC4EAQR6KNMTQSTQBWUAIGCE3GNTSBSFDYOBASCCQLAX
LFAHFQO4PFS5BTBHX4VSMAINK7LWZKQPJ53X5XDMAYDJK6OASPTNXBXEFJD2VDDJ2TMNILRIASYK
QUCIGLBEPK6TAQJQM6SWVTOT54E7QZ2OTU2CEPC4A4CFEKC4CZ2B54D6CRTBHOY76GEZXEGM3ZRX
BYNMDBZ7S5GJDIV6STCKOKCUV7ASVWRZA7ECRHJEXHDAAEKX7CBAROXILNPLUXVBNA2WMD6EFOPV
GWB2BFWWZPSX4OHZ5WUVCNLHU6BD6JVD6CRYKMIALXXAJZGLTVUBISU6LYX565ER5G2R3AWURVKF
QTWX4EUQ3D6EEKBKBYJXEEAU2NA76XTHW4H5WZVYKLJI7MZ5WVADA47RHS4GFUG2YVGVCAUBPW4B
LVWPDP5WMQXI2KJ6EEESZDDGTJG2MC7CD667CZSKIEL4J6YJWLG5CONRBSPV25H4TMAIHLZLFD6T
24ZH6H3QE4OEMR2WFCIF7WWQD766H3Q4NRYXFVDUFIQFL5PRO4AHOX5756AKQL2F3UJZTCMPEUZ3
PLDZXYPZZ44YGAYC7WIYCO6URNNRACAUEGH5SFKQDMFFUFSYXMPZ7GV6UPV6UDSXZH23HYFLFCPP
6ON27T2B2BWPCL2YW5Z2LJK2JH3VRPBL6AY5WZCGMLHBA5VV5MUMS5IHED5Y5DBT6DNT36MNBY43
PZHZ44GSQV3ZTNNTFZUHWJETPEIVZSAFIS7U2RUOTB7NCC25QFIQMNDEFFOJP4OPBIQ42BNEZSY2
LXVAPG4FZFVT7ABUIQK3WUFJ3K2BTEZYRQKL4ROHZZLBASMS4BQZQMUQ6J7R4L2M5TE4LTFJZQLV
DKWKWJDVE7LCTKBDXCLV3673QV3AJAA5C7XT2CF65FPIFPY76UC4YTTNKN3MHY3BNHD3LOJKGB2U
QLAOLZZ625OCRCS77GX6AJQJFH5I2LPXJLJWEGVKJN6J7MDOEHPBU37ZZQMLLVYPFUT57ADBB67C
5Q
END-RANK245-7X7X7-ZLIB-BASE32
\end{Verbatim}

\clearpage
This Python program reads a copy of the scheme above from standard
input and verifies that it represents the $7\times7\times7$ matrix
multiplication tensor over $\mathbb F_2$.

\begin{Verbatim}[fontsize=\scriptsize]
import base64
import sys
import zlib

BEGIN = "BEGIN-RANK245-7X7X7-ZLIB-BASE32"
END = "END-RANK245-7X7X7-ZLIB-BASE32"
N = 7
RANK = 245

def support_bits(mask):
    return [bit for bit in range(N * N) if mask & (1 << bit)]

text = sys.stdin.read()
payload = text.split(BEGIN, 1)[1].split(END, 1)[0]
payload = "".join(payload.split())
payload += "=" * (-len(payload) % 8)
data = zlib.decompress(base64.b32decode(payload))
if len(data) != RANK * 3 * N:
    raise ValueError("unexpected decoded length")

scheme = []
for offset in range(0, len(data), 21):
    masks = [
        int.from_bytes(data[offset + part : offset + part + 7], "little")
        for part in (0, 7, 14)
    ]
    supports = [support_bits(mask) for mask in masks]
    if any(not support for support in supports):
        raise ValueError("zero factor in rank-one term")
    scheme.append(supports)

if len(scheme) != RANK:
    raise ValueError("unexpected number of rank-one terms")

parity = set()
for A, B, C in scheme:
    for a in A:
        for b in B:
            for c in C:
                key = (a, b, c)
                if key in parity:
                    parity.remove(key)
                else:
                    parity.add(key)

expected = set()
for i in range(N):
    for j in range(N):
        for k in range(N):
            expected.add((i * N + j, j * N + k, i * N + k))

if parity != expected:
    raise ValueError(
        f"verification failed: {len(expected - parity)} missing, "
        f"{len(parity - expected)} extra"
    )

print("verified rank-245 decomposition of M^(7) over F_2")
\end{Verbatim}

\end{document}